\renewcommand\bra[1]{{\langle{#1}|}}
\renewcommand\ket[1]{%
  \@ifnextchar\bra{\k@t{#1}\!}{\k@t{#1}}%
}
\newcommand\k@t[1]{{|{#1}\rangle}}
\DeclareMathOperator{\tr}{tr}
\newcommand{\Id}{\mathds{1}}
\newtheorem{thm}{Theorem}
\newtheorem*{thm*}{Theorem}
\newtheorem{lem}[thm]{Lemma}
\newtheorem{prop}[thm]{Proposition}
\newtheorem*{rep@theorem}{\rep@title}
\newcommand{\newreptheorem}[2]{%
\newenvironment{rep#1}[1]{%
 \def\rep@title{#2 \ref{##1} (restated)}%
 \begin{rep@theorem}}%
 {\end{rep@theorem}}}
\begin{document}
\normalem


\title{Optimal verification of entangled states with local measurements}

\author{Sam Pallister}
\email{sam.pallister@bristol.ac.uk}
\affiliation{School of Mathematics, University of Bristol, UK}
\affiliation{Quantum Engineering Centre for Doctoral Training, University of Bristol, UK}
\author{Noah Linden}
\email{n.linden@bristol.ac.uk}
\affiliation{School of Mathematics, University of Bristol, UK}
\author{Ashley Montanaro}
\email{ashley.montanaro@bristol.ac.uk}
\affiliation{School of Mathematics, University of Bristol, UK}

\date{\today}

\begin{abstract}
Consider the task of verifying that a given quantum device, designed to produce a particular entangled state, does indeed produce that state. One natural approach would be to characterise the output state by quantum state tomography; or alternatively to perform some kind of Bell test, tailored to the state of interest.  We show here that neither approach is optimal amongst local verification strategies for two qubit states. We find the optimal strategy in this case and show that quadratically fewer total measurements are needed to verify to within a given fidelity than in published results for quantum state tomography, Bell test, or fidelity estimation protocols. We also give efficient verification protocols for any stabilizer state. Additionally, we show that requiring that the strategy be constructed from local, non-adaptive and non-collective measurements only incurs a constant-factor penalty over a strategy without these restrictions.
\end{abstract}

\maketitle



Efficient and reliable quantum state preparation is a necessary step for all quantum technologies. However, characterisation and verification of such devices is typically a time-consuming and computationally difficult process. For example, tomographic reconstruction of a state of 8 ions required taking $\sim 650,000$ measurements over 10 hours, and a statistical analysis that took far longer~\cite{Haffner2005}; verification of a few-qubit photonic state is similarly challenging~\cite{Carolan2014,Laing2012}. This is also the case in tomography of continuous-variable systems~\cite{Lvovsky2009a,Bellini2012a,Amosov2012a}. One may instead resort to non-tomographic methods to verify that a device reliably outputs a particular state, but such methods typically either: (a) assume that the output state is within some special family of states, for example in compressed sensing~\cite{Flammia2012,Gross2010a} or matrix product state tomography~\cite{Cramer2010}; or (b) extract only partial information about the state, such as when estimating entanglement witnesses~\cite{Toth2005b,Toth2005c}.

Here, we derive the optimal local verification strategy for common entangled states and compare its performance to bounds for non-adaptive quantum state tomography in~\cite{Sugiyama2013} and the fidelity estimation protocol in~\cite{Flammia2011}. Specifically, we demonstrate non-adaptive verification strategies for arbitrary two-qubit states and stabilizer states of $N$ qubits that are constructed from local measurements, and require quadratically fewer copies to verify to within a given fidelity than for these previous protocols. Moreover, the requirement that the measurements be local incurs only a constant factor penalty over the best non-local strategy, even if collective and adaptive measurements are allowed.

\paragraph*{Premise.}

Colloquially, a quantum state verification protocol is a procedure for gaining confidence that the output of some device is a particular state over any other. However, for any scheme involving measurements on a finite number of copies of the output state, one can always find an alternative state within some sufficiently small distance that is guaranteed to fool the verifier. Furthermore, the outcomes of measurements are, in general, probabilistic and a verification protocol collects a finite amount of data; and so any statement about verification can only be made up to some finite statistical confidence. The only meaningful statement to make in this context is the statistical inference that the state output from a device sits within a ball of a certain small radius (given some metric) of the correct state, with some statistical confidence. Thus the outcome of a state verification protocol is a statement like: ``the device outputs copies of a state that has $99\%$ fidelity with the target, with $90\%$ probability''. Note that this is different to the setting of state tomography; a verification protocol answers the question: \textit{``Is the state $\ket{\psi}?$''} rather than the more involved tomographic question: \textit{``Which state do I have?''}. Hence, unlike tomography, a verification protocol may give no information about the true state if the protocol fails.

We now outline the framework for verification protocols that we consider. Take a verifier with access to some set of allowed measurements, and a device that produces states $\sigma_1, \sigma_2, \ldots \sigma_n$ which are supposed to all be $\ket{\psi}$, but may in practice be different from $\ket{\psi}$ or each other. We have the promise that either $\sigma_i = \ket{\psi}\bra{\psi}$ for all $i$, or $\bra{\psi}\sigma_i\ket{\psi}\le 1-\epsilon$ for all $i$. The verifier must determine which is the case with worst-case failure probability $\delta$.

The protocol proceeds as follows. For each $\sigma_i$, the verifier randomly draws a binary-outcome projective measurement $\{P_j,\Id-P_j\}$ from a prespecified set $\mathcal{S}$ with some probability $\mu^i_j$. Label the outcomes ``pass'' and ``fail''; in a ``pass'' instance the verifier continues to state $\sigma_{i+1}$, otherwise the protocol ends and the verifier concludes that the state was not $\ket{\psi}$. If the protocol passes on all $n$ states, then the verifier concludes that the state was $\ket{\psi}$. We impose the constraint that every $P_j \in \mathcal{S}$ \emph{always} accepts when $\sigma_i = \ket{\psi}\bra{\psi}$, $\forall i$ (i.e.\ that $\ket{\psi}$ is in the ``pass'' eigenspace of every projector $P_j \in \mathcal{S}$). This may seem a prohibitively strong constraint, but we later demonstrate that it is both achievable for the sets of states we consider and is always asymptotically favourable to the verifier.

The maximal probability that the verifier passes on copy $i$ is
\begin{equation}
\text{Pr}[\text{Pass on copy }i] = \max_{\substack{\sigma \\ \bra{\psi}\sigma\ket{\psi}\le 1-\epsilon}} \tr(\Omega_i \sigma),
\end{equation}
where $\Omega_i = \sum_j \mu_j^i P_j$. However, the verifier seeks to minimise this quantity for each $\Omega_i$ and hence it suffices to take a fixed set of probabilities and projectors $\{\mu_j, P_j\}$, independent of $i$. Then the verifier-adversary optimisation is
\begin{equation}
\min_\Omega \max_{\substack{\sigma \\ \bra{\psi}\sigma\ket{\psi}\le 1-\epsilon}} \tr(\Omega \sigma) \coloneqq 1 - \Delta_\epsilon,
\end{equation}
where $\Omega = \sum_j \mu_j P_j$. We call $\Omega$ a {\em strategy}. $\Delta_\epsilon$ is the expected probability that the state $\sigma$ fails a single measurement. Then the maximal worst-case probability that the verifier fails to detect that we are in the ``bad'' case that $\bra{\psi}\sigma_i\ket{\psi}\le 1-\epsilon$ for all $i$ is $(1 - \Delta_\epsilon)^n$, so to achieve confidence $1-\delta$ it is sufficient to take
\begin{equation} 
n \ge \frac{\ln \delta^{-1}}{\ln((1-\Delta_{\epsilon})^{-1})} \approx \frac{1}{\Delta_{\epsilon}} \ln \delta^{-1}. 
\end{equation}

Protocols of this form satisfy some useful operational properties:
\begin{enumerate}
	\renewcommand{\theenumi}{\Alph{enumi}}
    \item \emph{Non-adaptivity}. The strategy is fixed from the outset and depends only on the mathematical description of $\ket{\psi}$, rather than the choices of any prior measurements or their measurement outcomes.
    \item \emph{Future-proofing}. The strategy is independent of the infidelity $\epsilon$, and gives a viable strategy for any choice of $\epsilon$. Thus an experimentalist is able to arbitrarily decrease the infidelity $\epsilon$ within which verification succeeds by simply taking more total measurements following the strategy prescription, rather than modifying the prescription itself. The experimentalist is free to choose an arbitrary $\epsilon > 0$ and be guaranteed that the strategy still works in verifying $\ket{\psi}$.
\end{enumerate}

One may consider more general non-adaptive verification protocols given $\mathcal{S}$ and $\{\sigma_i\}$, where measurements do not output ``pass'' with certainty given input $\ket{\psi}$, and the overall determination of whether to accept or reject is based on a more complicated estimator built from the relative frequency of ``pass'' and ``fail'' outcomes. However, we show in the Supplemental Material that these strategies require, asymptotically, quadratically more measurements in $\epsilon$ than those where $\ket{\psi}$ is always accepted. We will also see that the protocol outlined above achieves the same scaling with $\epsilon$ and $\delta$ as the globally optimal strategy, up to a constant factor, and so any other strategy (even based on non-local, adaptive or collective measurements) would yield only at most constant-factor improvements.

Given no constraints on the verifier's measurement prescription, the optimal strategy is to just project on to $\ket{\psi}$. In this case, the fewest number of measurements needed to verify to confidence $1-\delta$ and fidelity $1-\epsilon$ is $n_{opt} = \frac{-1}{\ln\left(1-\epsilon\right)}\ln\frac{1}{\delta} \approx \frac{1}{\epsilon}\ln\frac{1}{\delta}$ (see the Supplemental Material). However, in general the projector $\ket{\psi}\bra{\psi}$ will be non-local, which has the disadvantage of being harder to implement experimentally. This is particularly problematic in quantum optics, for example, where deterministic, unambiguous discrimination of a complete set of Bell states is impossible~\cite{Vaidman1999a, Calsamiglia2001, Ewert2014}. Thus, for each copy there is a fixed probability of the measurement returning a ``null'' outcome; hence, regardless of the optimality of the verification strategy, merely the probability of its successful operation decreases exponentially with the number of measurements. Instead, we seek optimal measurement strategies that satisfy some natural properties that make them both physically realisable and useful to a real-world verifier. We impose the following properties:

\begin{enumerate}
    \item \emph{Locality}. $\mathcal{S}$ contains only measurements corresponding to local observables, acting on a single copy of the output state.
    \item \emph{Projective measurement}. $\mathcal{S}$ contains only binary-outcome, projective measurements, rather than more elaborate POVMs.
    \item \emph{Trust}. The physical operation of each measurement device is faithful to its mathematical description; it behaves as expected, without experimental error.
\end{enumerate}

Thus for multipartite states we only consider strategies where each party locally performs a projective measurement on a single copy, and the parties accept or reject based on their collective measurement outcomes. We also highlight the trust requirement to distinguish from self-testing protocols~\cite{Mayers2004,McKague2012,Yang2013}.

Given this prescription and the set of physically-motivated restrictions, we now derive the optimal verification strategy for some important classes of states. To illustrate our approach, we start with the case of a Bell state before generalising to larger classes of states.


\paragraph*{Bell state verification.}

Consider the case of verifying the Bell state $\ket{\Phi^+}= \frac{1}{\sqrt{2}}(\ket{00}+\ket{11})$. If we maintain a strategy where all measurements accept $\ket{\Phi^+}$ with certainty, then it must be the case that $\Omega\ket{\Phi^+} = \ket{\Phi^+}$. The optimisation problem for the verifier-adversary pair is then given by $\Delta_\epsilon$:
\begin{equation}
\Delta_\epsilon = \max_\Omega \min_{\substack{\sigma \\ \bra{\psi}\sigma\ket{\psi}\le 1-\epsilon}} \tr[\Omega(\ket{\Phi^+}\bra{\Phi^+}-\sigma)].
\end{equation}
However, we show in the Supplemental Material that it is never beneficial for the adversary to: (a) choose a non-pure $\sigma$; or (b) to pick a $\sigma$ such that $\bra{\psi}\sigma\ket{\psi} < 1- \epsilon$. Rewrite $\sigma = \ket{\psi_\epsilon}\bra{\psi_\epsilon}$, where $\ket{\psi_\epsilon} = \sqrt{1-\epsilon}\ket{\Phi^+}+\sqrt{\epsilon}\ket{\psi^\bot}$ for some state $\ket{\psi^\bot}$ such that $\braket{\Phi^+|\psi^\bot}=0$. Then,
\begin{align}
\nonumber \Delta_\epsilon &= \max_\Omega \min_{\ket{\psi^\bot}} \epsilon (\bra{\Phi^+}\Omega\ket{\Phi^+}-\bra{\psi^\bot}\Omega\ket{\psi^\bot})\\
&- 2\sqrt{\epsilon(1-\epsilon)}\text{Re}\bra{\Phi^+}\Omega\ket{\psi^\bot}.
\end{align}

Given that $\Omega\ket{\Phi^+}=\ket{\Phi^+}$, we can simplify by noting that $\bra{\Phi^+}\Omega\ket{\Phi^+}=1$ and $\bra{\Phi^+}\Omega\ket{\psi^\bot}=0$. Thus,
\begin{align}
    \nonumber \Delta_\epsilon &= \max_\Omega \min_{\ket{\psi^\bot}} \epsilon(1-\bra{\psi^\bot}\Omega\ket{\psi^\bot})\\
    &= \epsilon(1-\min_\Omega \max_{\ket{\psi^\bot}} \bra{\psi^\bot}\Omega\ket{\psi^\bot}),
\end{align}
where the verifier controls $\Omega$ and the adversary controls $\ket{\psi^\bot}$. Given that $\ket{\Phi^+}$ is itself an eigenstate of $\Omega$, the worst-case scenario for the verifier is for the adversary to choose $\ket{\psi^\bot}$ as the eigenstate of $\Omega$ with the next largest eigenvalue. If we diagonalise $\Omega$ we can write $\Omega = \ket{\Phi^+}\bra{\Phi^+} + \sum_{j=1}^3 \nu_j \ket{\psi^\bot_j}\bra{\psi^\bot_j}$, where $\braket{\Phi^+|\psi^\bot_j}=0 \; \forall j$. The adversary picks the state $\ket{\psi^\bot_{\text{max}}}$ with corresponding eigenvalue $\nu_{\text{max}} = \max_j \nu_j$. Now, consider the trace of $\Omega$: if $\tr(\Omega) < 2$ then the strategy must be a convex combination of local projectors, at least one of which is rank 1. However, the only rank 1 projector that satisfies $P^+\ket{\Phi^+}=\ket{\Phi^+}$ is $P^+=\ket{\Phi^+}\bra{\Phi^+}$, which is non-local; and therefore $\tr(\Omega) \geq 2$. Combining this with the expression for $\Omega$ above gives $\tr(\Omega) = 1+\sum_j \nu_j \geq 2$. It is always beneficial to the verifier to saturate this inequality, as any extra weight on the subspace orthogonal to $\ket{\Phi^+}$ can only increase the chance of being fooled by the adversary. Thus the verifier is left with the optimisation
\begin{equation}
\min \nu_{\text{max}} = \min \max_k \nu_k, \quad \sum_k \nu_k = 1.
\end{equation}
This expression is optimised for $\nu_j = \frac{1}{3}, j=1,2,3$. In this case, $\Omega = \frac{\Id}{3}$ on the subspace orthogonal to the state $\ket{\Phi^+}$. Then we can rewrite $\Omega$ as
\begin{equation}
\Omega = \frac{1}{3}(P^+_{XX}+P^+_{-YY}+P^+_{ZZ}),
\end{equation}
where $P^+_{XX}$ is the projector onto the positive eigensubspace of the tensor product of Pauli matrices $XX$ (and likewise for $-YY$ and $ZZ$). The operational interpretation of this optimal strategy is then explicit: for each copy of the state, the verifier randomly chooses a measurement setting from the set $\{XX,-YY,ZZ\}$ all with probability $\frac{1}{3}$, and accepts only on receipt of outcome ``+1'' on all $n$ measurements. Note that we could expand $\Omega$ differently, for example by conjugating each term in the above expression by any local operator that leaves $\ket{\Phi^+}$ alone; the decomposition above is only one of a family of optimal strategies. As for scaling, we know that $\Delta_\epsilon = \epsilon(1-\nu_{\text{max}}) = \frac{2\epsilon}{3}$, and the number of measurements needed to verify the Bell state $\ket{\Phi^+}$ is then $n_{opt} = \left[\ln\left(\frac{3}{3-2\epsilon}\right)\right]^{-1}\ln{\frac{1}{\delta}} \approx \frac{3}{2\epsilon}\ln\frac{1}{\delta}$. Note that this is only worse than the optimal non-local strategy by a factor of $1.5$.

In comparison, consider instead verifying a Bell state by performing a CHSH test. Then even in the case of trusted measurements, the total number of measurements scales like $O\left(\frac{1}{\epsilon^2}\right)$~\cite{Sugiyama2014}, which is quadratically worse than the case of measuring the stabilizers $\{XX,-YY,ZZ\}$. This suboptimal scaling is shared by the known bounds for non-adaptive quantum state tomography with single-copy measurements in~\cite{Sugiyama2013} and fidelity estimation in~\cite{Flammia2011}. See~\cite{DaSilva2011,Ferrie2016a,Struchalin2016a} for further discussion of this scaling in tomography. Additionally, two-qubit tomography potentially requires five times as many measurement settings. We also note that a similar quadratic improvement was derived in adaptive quantum state tomography in~\cite{Mahler2013}, in the sample-optimal tomographic scheme in~\cite{Haah2016} and in the quantum state certification scheme in~\cite{Badescu2017a}; however, the schemes therein assume access to either non-local or collective measurements.


\paragraph*{Arbitrary states of two qubits.}

The goal is unchanged for other pure states of two qubits: we seek strategies that accept the target state with certainty, and hence achieve the asymptotic advantage outlined for Bell states above. It is not clear a priori that such a strategy exists for general states, in a way that is as straightforward as the previous construction. However, we show that for any two-qubit state not only does such a strategy exist, but we can optimise within the family of allowable strategies and give an analytic expression with optimal constant factors.

We first remark that we can restrict to states of the form $\ket{\psi_\theta} = \sin\theta\ket{00} +\cos\theta\ket{11}$ without loss of generality, as any state is locally equivalent to a state of this form, for some $\theta$. Specifically, given any two qubit state $\ket{\psi}$ with optimal strategy $\Omega_{opt}$, a locally equivalent state $(U \otimes V)\ket{\psi}$ has optimal strategy $(U \otimes V) \Omega_{opt} (U \otimes V)^\dagger$. The proof of this statement can be found in the Supplemental Material. Given the restriction to this family of states, we can now write down an optimal verification protocol.

\begin{figure*}
\begin{minipage}[t]{0.48\textwidth}
\includegraphics[width=\linewidth]{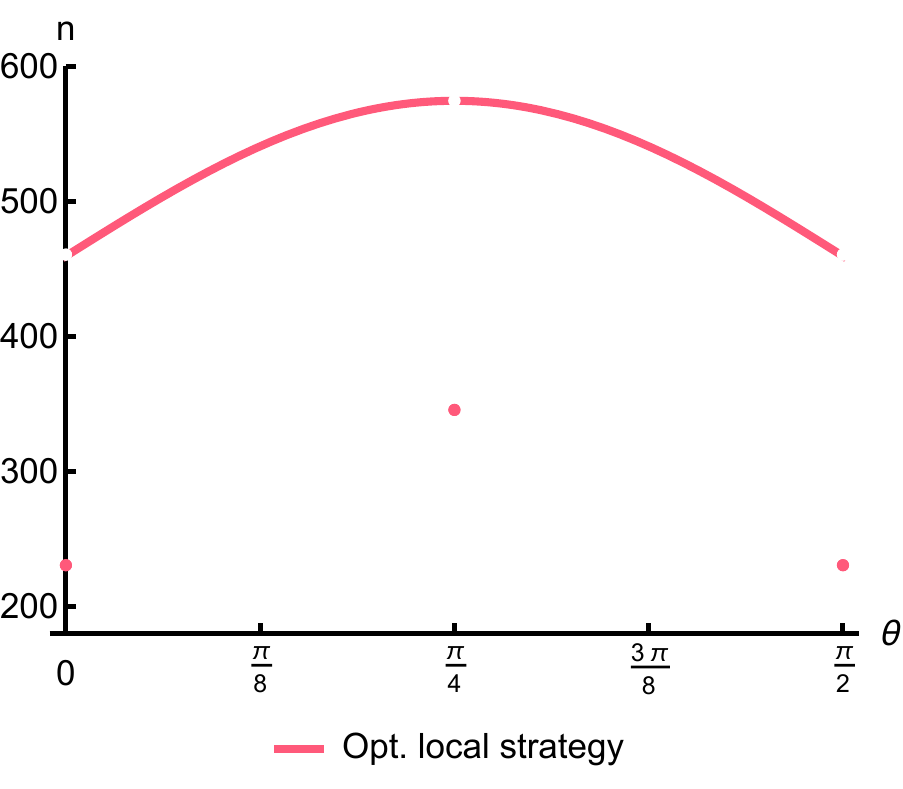}
\caption{The number of measurements needed to verify the state $\Ket{\psi_\theta}=\sin\theta\Ket{00}+\cos\theta\Ket{11}$, as a function of $\theta$, using the optimal strategy. See Eq.~\ref{eq:twoqubitperf}. Here, $1-\epsilon = 0.99$ and $1 - \delta = 0.9$.}
\label{fig:two_qubit_perf}
\end{minipage}\hfill%
\begin{minipage}[t]{0.48\linewidth}
\includegraphics[width=\textwidth]{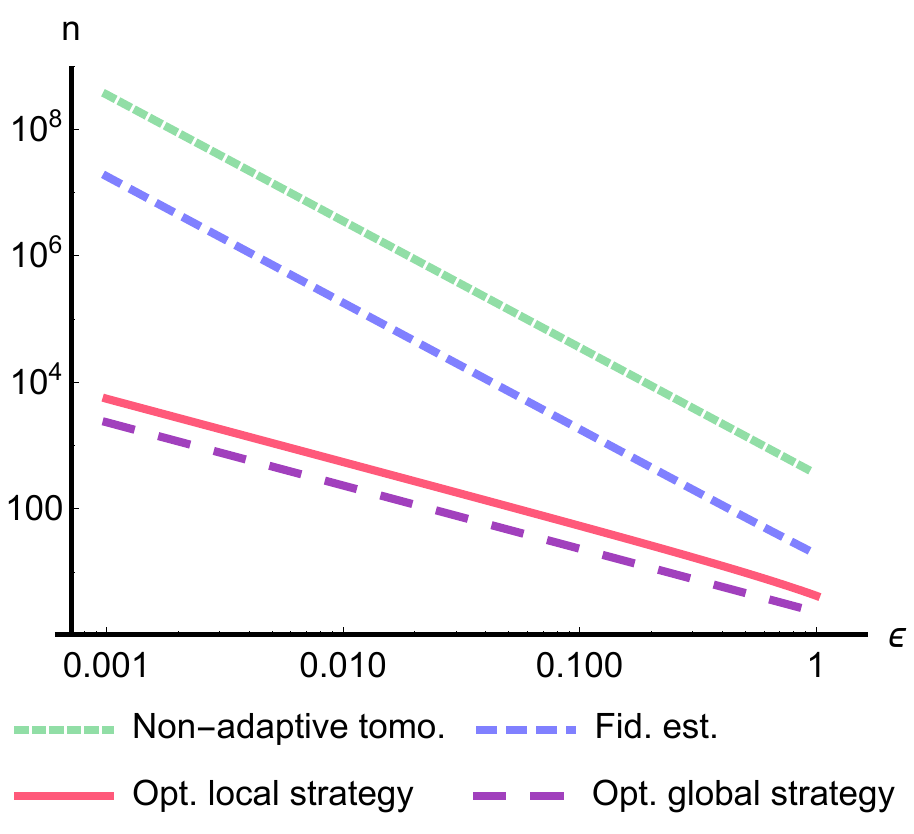}
\caption{A comparison of the total number of measurements required to verify to fidelity $1-\epsilon$ for the strategy derived here, versus the known bounds for estimation up to fidelity $1-\epsilon$ using non-adaptive tomography in~\cite{Sugiyama2013} and the fidelity estimation protocol in~\cite{Flammia2011}, and the globally optimal strategy given by projecting onto $\Ket{\psi}$. Here, $1 - \delta = 0.9$ and $\theta = \frac{\pi}{8}$.}
\label{fig:epsilon_plot}
\end{minipage}%
\end{figure*}
    
\begin{thm}\label{thm:two_qubits}
Any optimal strategy for verifying a state of the form $\ket{\psi_\theta}=\sin\theta\ket{00}+\cos\theta\ket{11}$ for $0 < \theta < \frac{\pi}{2}$, $\theta \neq \frac{\pi}{4}$ that accepts $\ket{\psi_\theta}$ with certainty and satisfies the properties of locality, trust and projective measurement, can be expressed as a strategy involving four measurement settings:
\begin{align}
\nonumber \Omega_{opt} &= \alpha(\theta)P^+_{ZZ}\\
\nonumber &+ \frac{1-\alpha(\theta)}{3}\sum_{k=1}^3 \left[\Id - (\ket{u_k}\otimes\ket{v_k})(\bra{u_k}\otimes\bra{v_k})\right],\\ 
\text{for } &\alpha(\theta) = \frac{2-\sin(2\theta)}{4+\sin(2\theta)},
\end{align}
where $P^+_{ZZ}$ is the projector onto the positive eigenspace of the Pauli operator $ZZ$, and the sets of states $\{\ket{u_k}\}$ and $\{\ket{v_k}\}$ are written explicitly in the Supplemental Material. The number of measurements needed to verify to within infidelity $\epsilon$ and with power $1-\delta$ satisfies
\begin{equation}\label{eq:twoqubitperf}
n_{opt} \approx (2+\sin\theta\cos\theta)\epsilon^{-1}\ln\delta^{-1}.
\end{equation}
\end{thm}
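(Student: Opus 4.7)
The plan is to reduce Theorem~\ref{thm:two_qubits} to a constrained minimisation of the largest eigenvalue $\nu_{\max}$ of $\Omega$ on $\ket{\psi_\theta}^\perp$, classify the admissible local projectors, derive two matching lower bounds on $\nu_{\max}$ whose balance point simultaneously fixes $\alpha(\theta)$ and $\nu_{\max}$, and finally exhibit a four-projector construction saturating both. Abbreviate $s=\sin(2\theta)$; as in the Bell-state analysis $\Delta_\epsilon=\epsilon(1-\nu_{\max})$. It is convenient to work in the basis $\{\ket{\psi_\theta^\perp},\ket{01},\ket{10}\}$ of $\ket{\psi_\theta}^\perp$, with $\ket{\psi_\theta^\perp}:=\cos\theta\ket{00}-\sin\theta\ket{11}$.

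First I classify binary-outcome local projective measurements whose ``pass'' projector $P$ satisfies $P\ket{\psi_\theta}=\ket{\psi_\theta}$. Rank-$1$ projectors are ruled out since $\ket{\psi_\theta}$ is entangled. A rank-$2$ pass projector is a sum of two product-state projectors, and containing $\ket{\psi_\theta}$ in its range forces the two local bases to coincide with the Schmidt basis of $\ket{\psi_\theta}$; by uniqueness of the Schmidt decomposition when the Schmidt coefficients are distinct ($\theta\neq\pi/4$), this pins the projector to $P^+_{ZZ}$. Rank-$3$ pass projectors are exactly $\Id-\ket{uv}\bra{uv}$ for product states $\ket{uv}\perp\ket{\psi_\theta}$. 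Every admissible strategy can therefore be written $\Omega=\alpha P^+_{ZZ}+\sum_k\beta_k(\Id-\ket{u_kv_k}\bra{u_kv_k})$ with $\alpha+\sum_k\beta_k=1$, and restricting to $\ket{\psi_\theta}^\perp$ gives
\begin{equation*}
\Omega|_{\psi_\theta^\perp}=\alpha\ket{\psi_\theta^\perp}\bra{\psi_\theta^\perp}+(1-\alpha)\Id_3-M,\qquad M:=\sum_k\beta_k\ket{u_kv_k}\bra{u_kv_k}.
\end{equation*}

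Two lower bounds on $\nu_{\max}$ now follow. Averaging the three eigenvalues of $\Omega|_{\psi_\theta^\perp}$ gives $\nu_{\max}\geq\tr(\Omega|_{\psi_\theta^\perp})/3=(2-\alpha)/3$. The Rayleigh quotient along $\ket{\psi_\theta^\perp}$ gives $\nu_{\max}\geq 1-M_{11}$ with $M_{11}=\sum_k\beta_k|\braket{\psi_\theta^\perp|u_kv_k}|^2$. To bound this, I parametrise $\ket u=\cos\phi_u\ket0+\sin\phi_u e^{i\chi_1}\ket1$ and similarly for $\ket v$: the orthogonality $\braket{\psi_\theta|uv}=0$ fixes $e^{i(\chi_1+\chi_2)}=-1$ and $\tan\phi_u\tan\phi_v=\tan\theta$, after which a short calculation reduces $|\braket{\psi_\theta^\perp|uv}|^2$ to the phase-independent expression $\cos^2\phi_u\cos^2\phi_v/\cos^2\theta$. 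An AM--GM argument on $(1+\tan^2\phi_u)(1+\tan^2\phi_v)\geq(1+\tan\theta)^2$ then yields the sharp maximum $1/(1+s)$, attained at $\phi_u=\phi_v=\arctan\sqrt{\tan\theta}$. Hence $M_{11}\leq(1-\alpha)/(1+s)$ and $\nu_{\max}\geq(\alpha+s)/(1+s)$. Balancing the two bounds, $(2-\alpha)/3=(\alpha+s)/(1+s)$, produces the unique optimum $\alpha(\theta)=(2-s)/(4+s)$ and $\nu_{\max}=(2+s)/(4+s)$, which substituted into $n_{opt}\approx[\epsilon(1-\nu_{\max})]^{-1}\ln\delta^{-1}$ gives Eq.~\eqref{eq:twoqubitperf}.

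For attainability I take three product states all with $\phi_u=\phi_v=\arctan\sqrt{\tan\theta}$ (saturating the overlap maximum) and only the $\ket1$-component phase $\chi_{1,k}$ varying over $\{0,2\pi/3,4\pi/3\}$ (with $\chi_{2,k}=\pi-\chi_{1,k}$ enforcing orthogonality), each weighted $\beta_k=(1-\alpha(\theta))/3$. Using $\sum_k e^{ic\chi_{1,k}}=0$ for $c\in\{\pm1,\pm2\}$, a direct computation shows that all off-diagonal entries of $M$ vanish and the diagonal matches the profile required for $\Omega|_{\psi_\theta^\perp}=\nu_{\max}\Id_3$, simultaneously saturating both bounds. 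The main obstacle I anticipate is the tight overlap maximisation in the third paragraph: the complex two-parameter family of product states orthogonal to $\ket{\psi_\theta}$ must be treated carefully so that both the phase and amplitude analyses collapse to the symmetric critical point $\phi_u=\phi_v$, and it is this single extremal value $1/(1+s)$ that propagates through both bounds and determines the $\theta$-dependence of $\alpha(\theta)$.
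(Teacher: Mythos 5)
Your proposal is correct, and after the initial classification step it follows a genuinely different route from the paper. The paper proceeds by symmetrising $\Omega$ over the symmetries of $\ket{\psi_\theta}$ (the continuous phase twirl $M_\zeta\otimes M_{-\zeta}$, SWAP and complex conjugation), reducing to a two-parameter matrix family, using a convex-hull argument in the $(\lambda_1,\lambda_2)$ plane together with a PPT/separability constraint $\lambda_1^{(3)}\geq\sin(2\theta)/(1+\sin(2\theta))$, and then doing explicit two-variable calculus in $(\alpha,\phi)$ before decomposing the averaged trace-3 operator into three product projectors. You instead prove two elementary necessary bounds on $\nu_{\max}$ --- the trace/pigeonhole bound $(2-\alpha)/3$ and the Rayleigh-quotient bound $(\alpha+s)/(1+s)$, the latter resting on the sharp overlap maximum $\max|\braket{\psi_\theta^\perp|uv}|^2=1/(1+s)$ over product states orthogonal to $\ket{\psi_\theta}$ (your AM--GM computation is the same quantity as the paper's PPT bound, since $1-\lambda_{LB}=1/(1+s)$) --- balance them to fix $\alpha(\theta)$ and the optimal value $(2+s)/(4+s)$, and then verify that the cube-root-of-unity phase construction (which coincides with the paper's states $\ket{\phi_k}$) makes $\Omega|_{\psi_\theta^\perp}$ proportional to the identity, saturating both bounds. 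This matching lower-bound/achievability argument is shorter and avoids the averaging, convex-hull and calculus machinery; what the paper's symmetrisation buys in exchange is a structural ``without loss of generality'' reduction of \emph{every} admissible strategy to the stated form, which is closer to the literal ``any optimal strategy can be expressed as'' phrasing (your balance argument does pin down $\alpha=\alpha^*$ for any optimal strategy, but characterising the trace-3 part of all optimisers would need a short equality analysis). Two cosmetic points: your classification omits the trivial always-accept (identity) component, which the paper removes via its Lemma on identity parts; including it only increases the trace and decreases $M_{11}$, so both of your bounds survive unchanged, but it is worth a sentence. Likewise the degenerate annihilating states $\ket{01},\ket{10}$ (where the phase condition is vacuous) have zero overlap with $\ket{\psi_\theta^\perp}$ and so respect the $1/(1+s)$ bound, which you should note when invoking the parametrisation.
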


The proof of this theorem is included in the Supplemental Material. Note that the special cases for $\ket{\psi_\theta}$ where $\theta = 0$, $\theta = \frac{\pi}{2}$ and $\theta = \frac{\pi}{4}$ are omitted from this theorem. In these cases, $\ket{\psi_\theta}$ admits a wider choice of measurements that accept with certainty. We have already treated the Bell state case $\theta = \frac{\pi}{4}$ above. In the other two cases, the state $\ket{\psi_\theta}$ is product and hence the globally optimal measurement, just projecting onto $\ket{\psi_\theta}$, is a valid local strategy. We note that this leads to a discontinuity in the number of measurements needed as a function of $\theta$, for fixed $\epsilon$ (as seen in Fig.~\ref{fig:two_qubit_perf}). This arises since our strategies are designed to have the optimal scaling $\left(O\left(\frac{1}{\epsilon}\right)\right)$ for fixed $\theta$, achieved by having strategies that accept $\ket{\psi}$ with probability $1$.

As for scaling, in Fig.~\ref{fig:epsilon_plot} the number of measurements required to verify a particular two-qubit state of this form, for three protocols, is shown. The optimal protocol derived here gives a marked improvement over the previously published bounds for both tomography~\cite{Sugiyama2013} and fidelity estimation~\cite{Flammia2011} for the full range of $\epsilon$, for the given values of $\theta$ and $\delta$. The asymptotic nature of the advantage for the protocol described here implies that the gap between the optimal scheme and tomography only grows as the requirement on $\epsilon$ becomes more stringent. Note also that the optimal local strategy is only marginally worse than the best possible strategy of just projecting onto $\ket{\psi}$.

\paragraph*{Stabilizer states.} Additionally, it is shown in the Supplemental Material that we can construct a strategy with the same asymptotic advantage for any stabilizer state, by drawing measurements from the stabilizer group (where now we only claim optimality up to constant factors). The derivation is analogous to that for the Bell state above, and given that the Bell state is itself a stabilizer state, the strategy above is a special case of the stabilizer strategy discussed below. For a state of $N$ qubits, a viable strategy constructed from stabilizers must consist of at least the $N$ stabilizer generators of $\ket{\psi}$. This is because a set of $k<N$ stabilizers stabilizes a subspace of dimension at least $2^{N-k}$, and so in this case there always exists at least one orthogonal state to $\ket{\psi}$ accessible to the adversary that fools the verifier with certainty. In this minimal case, the number of required measurements is $n_{opt}^{s.g.} \approx N\epsilon^{-1}\ln\delta^{-1}$, with this bound saturated by measuring all stabilizer generators with equal weight. Conversely, constructing a measurement strategy from the full set of $2^N - 1$ linearly independent stabilizers requires a number of measurements $n_{opt}^{stab} \approx \frac{2^N - 1}{ 2^{(N-1)}}\epsilon^{-1}\ln\delta^{-1}$, again with this bound saturated by measuring each stabilizer with equal weight. For growing $N$, the latter expression for the number of measurements is bounded from above by $2\epsilon^{-1}\ln\delta^{-1}$, which implies that there is a local strategy for any stabilizer state, of an arbitrary number of qubits, which requires at most twice as many measurements as the optimal non-local strategy. Note that this strategy may not be exactly optimal; for example, the state $\ket{00}$ is also a stabilizer state, and in this case applying the measurement $\ket{00}\bra{00}$ is both locally implementable and provably optimal. Thus, the exactly optimal strategy may depend more precisely on the structure of the individual state itself. However, the stabilizer strategy is only inferior by a small constant factor. In comparison to the latter strategy constructed from every stabilizer, the former strategy constructed from only the $N$ stabilizer generators of $\ket{\psi}$ has scaling that grows linearly with $N$. Thus there is ultimately a trade-off between number of measurement settings and total number of measurements required to verify within a fixed fidelity.

In principle, the recipe derived here to extract the optimal strategy for a state of two qubits can be applied to any pure state. However, we anticipate that deriving this strategy, including correct constants, may be somewhat involved (both analytically and numerically) for states of greater numbers of qubits.

Following the completion of this work, we became aware of \cite{Dimic2017a} which, among other results, applies a similar protocol to the Bell state verification strategy in the context of entanglement detection.


\begin{acknowledgments}
The authors thank Jeremy Adcock, Sam Morley-Short, Tony Short and Chris Sparrow for helpful discussions, and thank Borivoje Dakic for pointing out \cite{Dimic2017a}. SP was supported by the Bristol Quantum Engineering Centre for Doctoral Training, EPSRC grant EP/L015730/1. AM was supported by EPSRC Early Career Fellowship EP/L021005/1. No new data were created during this study.
\end{acknowledgments}


\bibliographystyle{apsrev4-1}
\bibliography{Verification}

\appendix

\widetext
\clearpage
\begin{center}
\textbf{\large Supplemental Material: Optimal verification of entangled states with local measurements}
\end{center}
\setcounter{equation}{0}
\setcounter{figure}{0}
\setcounter{table}{0}
\setcounter{page}{1}
\makeatletter
\counterwithout{equation}{section}
\renewcommand{\theequation}{S\arabic{equation}}
\renewcommand{\thefigure}{S\arabic{figure}}
\renewcommand{\bibnumfmt}[1]{[S#1]}
\renewcommand{\citenumfont}[1]{S#1}

The contents of the following supplemental material are as follows: in Appendix~\ref{app:verification}, we set up a formal framework for state verification protocols. In Appendix~\ref{app:ver_strat_opt} we simplify the form of the protocol using the set of physically-motivated strategy requirements outlined in the main body. Appendix~\ref{app:two_qubits} is concerned with deriving the optimal strategy for states of two qubits, in particular proving Theorem~1; and in Appendix~\ref{app:stabilizers} we derive efficient verification strategies for stabilizer states. Finally, Appendix~\ref{app:hypothesis_testing} outlines the hypothesis testing framework necessary for this paper. 

\section{Quantum state verification}\label{app:verification}

We first set up a formal framework for general state verification protocols. We assume that we have access to a device $\mathcal{D}$ that is supposed to produce copies of a state $\ket{\psi}$. However, $\mathcal{D}$ might not work correctly, and actually produces (potentially mixed) states $\sigma_1,\sigma_2,\dots$ such that $\sigma_i$ might not be equal to $\ket{\psi}\bra{\psi}$. In order to distinguish this from the case where the device works correctly by making a reasonable number of uses of $\mathcal{D}$, we need to have a promise that these states are sufficiently far from $\ket{\psi}$. So we are led to the following formulation of our task:\\

\noindent Distinguish between the following two cases:
\begin{enumerate}
\renewcommand{\theenumi}{(\alph{enumi})}
\item {\bf (Good)} $\sigma_i = \ket{\psi}\bra{\psi}$ for all $i$;
\item {\bf (Bad)} For some fixed $\epsilon$, $F(\ket{\psi},\sigma_i) := \braket{\psi|\sigma_i|\psi} \le 1-\epsilon$ for all $i$.
\end{enumerate}

Given a verifier with access to a set of available measurements $\mathcal{S}$, the protocols we consider for completing this task are of the following form:

\renewcommand{\thealgorithm}{}
\floatname{algorithm}{Protocol}
\begin{algorithm}[H]
  \caption{Quantum state verification}
  \label{alg:state_verification}
   \begin{algorithmic}[1]
   \For{$i = 1$ to $n$}
   \State Two-outcome measurement $M_i \in \mathcal{S}$ on $\sigma_i$, where $M_i$'s outcomes are associated with ``pass'' and ``fail''
   \If{``fail'' is returned}
   \State Output ``reject''
   \EndIf
   \EndFor
   \State Output ``accept''
   \end{algorithmic}
\end{algorithm}

We impose the conditions that in the good case, the protocol accepts with certainty, whereas in the bad case, the protocol accepts with probability at most $\delta$; we call $1-\delta$ the \emph{statistical power} of the protocol. We then aim to find a protocol that minimises $n$ for a given choice of $\ket{\psi}$, $\epsilon$ and $\mathcal{S}$, such that these constraints are satisfied. Insisting that the protocol accepts in the good case with certainty implies that all measurements in $\mathcal{S}$ are guaranteed to pass in this case. This is a desirable property in itself, but one could consider more general non-adaptive protocols where measurements do not output ``pass'' with certainty on $\ket{\psi}$, and the protocol determines whether to accept based on an estimator constructed from the relative frequency of ``pass'' and ``fail'' outcomes across all $n$ copies. We show in Appendix~\ref{app:hypothesis_testing} that this class of protocols has quadratically worse scaling in $\epsilon$ than protocols where each measurement passes with certainty on $\ket{\psi}$.

We make the following observations about this framework:

\begin{enumerate}
\item Given no restrictions on $M_i$, the optimal protocol is simply for each measurement to project onto $\ket{\psi}$. In fact, this remains optimal even over the class of more general protocols making use of adaptivity or collective measurements. One can see this as follows: if a two-outcome measurement $M$ (corresponding to the whole protocol) is described by measurement operators $P$ (accept) and $I-P$ (reject), then if $M$ accepts $\ket{\psi}^{\otimes n}$ with certainty, we must have $P = \ket{\psi}\bra{\psi}^{\otimes n} + P'$ for some residual positive semidefinite operator $P'$. Then replacing $P$ with $\ket{\psi}\bra{\psi}^{\otimes n}$ gives at least as good a protocol, as the probability of accepting $\ket{\psi}$ remains 1, while the probability of accepting other states cannot increase.

The probability of acceptance in the bad case after $n$ trials is then at most $(1-\epsilon)^n$, so it is sufficient to take
\begin{equation} \label{eq:globalopt} n \ge \frac{\ln \delta^{-1}}{\ln((1-\epsilon)^{-1})} \approx \epsilon^{-1} \ln \delta^{-1} \end{equation}
to achieve statistical power $1-\delta$. This will be the yardstick against which we will compare our more restricted protocols below.

\item We assume that the states $\sigma_i$ are independently and adversarially chosen. This implies that if (as we will consider below) $\mathcal{S}$ contains only projective measurements and does not contain the measurement projecting onto $\ket{\psi}\bra{\psi}$, it is necessary to choose the measurement $M_i$ at random from $\mathcal{S}$ and unknown to the adversary. Otherwise, we could be fooled with certainty by the adversary choosing $\sigma_i$ to have support only in the ``pass'' eigenspace of $M_i$ for each copy $i$.

\item We can be explicit about the optimisation needed to derive the optimal protocol in this adversarial setting. As protocols of the above form reject whenever a measurement fails, the adversary's goal at the $i$'th step is to maximise the probability that the measurement $M_i$ at that step passes on $\sigma_i$. If the $j$'th measurement setting in $\mathcal{S}$, $M^j$, is picked from $\mathcal{S}$ at step $i$ with probability $\mu_j^i$, the largest possible overall probability of passing for copy $i$ is
\begin{equation}
\text{Pr}[\text{Pass on copy }i] = \max_{\sigma_i,\braket{\psi|\sigma_i|\psi} \le 1-\epsilon} \sum_j \mu^i_j \tr(P_j \sigma_i),
\end{equation}
where we denote the corresponding ``pass'' projectors $P_j$. We can write $\Omega_i = \sum_j \mu_j^i P_j$, and then
\begin{equation}
\text{Pr}[\text{Pass on copy }i] = \max_{\sigma,\braket{\psi|\sigma|\psi} \le 1-\epsilon} \tr(\Omega_i \sigma).
\end{equation}
As the verifier, we wish to minimise this expression over all $\Omega_i$, so we end up with a final expression that does not depend on $i$. This leads us to infer that optimal protocols of this form can be assumed to be non-adaptive in two senses: they do not depend on the outcome of previous measurements (which is clear, as the protocol rejects if it ever sees a ``fail'' outcome); and they also do not depend on the measurement choices made previously.

Therefore, in order to find an optimal verification protocol, our task is to determine
\begin{equation} \label{eq:minsigma} \min_\Omega \max_{\sigma,\braket{\psi|\sigma|\psi} \le 1-\epsilon} \tr(\Omega \sigma), \end{equation}
where $\Omega$ is an operator of the form $\Omega = \sum_j \mu_j P_j$ for $P_j \in \mathcal{S}$ and some probability $\mu_j$. We call such operators {\em strategies}. 
If $\mathcal{S}$ contained all measurement operators (or even all projectors), $\Omega$ would be an arbitrary operator satisfying $0 \le \Omega \le I$. However, this notion becomes nontrivial when one considers restrictions on $\mathcal{S}$. Here, we focus on the experimentally motivated case where $\mathcal{S}$ contains only projective measurements that can be implemented via local operations and classical postprocessing.

\item In a non-adversarial scenario, it may be acceptable to fix the measurements in $\Omega$ in advance, with appropriate frequencies $\mu_j$. Then, given $n$, a strategy $\Omega = \sum_j \mu_j P_j$ corresponds to a protocol where for each $j$ we deterministically make $\mu_j n$ measurements $\{P_j,I-P_j\}$. For large $n$, and fixed $\sigma_i = \sigma$, this will achieve similar performance to the above protocol.

\item More complicated protocols with adaptive or collective measurements, or measurements with more than two outcomes, cannot markedly improve on the strategies derived here. We do not treat these more general strategies explicitly, but note that the protocols we will describe based on local projective measurements already achieve the globally optimal bound (\ref{eq:globalopt}) up to constant factors, so any gain from these more complex approaches would be minor.
\end{enumerate}

\section{Verification strategy optimisation}\label{app:ver_strat_opt}

In this appendix, we simplify the form of the optimisation in~\ref{eq:minsigma} using the strategy requirements outlined previously. We start by making the following useful observation:

\begin{lem}
We can assume without loss of generality that, in (\ref{eq:minsigma}), $\sigma$ is pure.
\end{lem}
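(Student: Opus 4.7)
The plan is to exploit the strategy constraint $\Omega\ket{\psi}=\ket{\psi}$, which is built into the framework since each $P_j \in \mathcal{S}$ accepts $\ket{\psi}$ with certainty, together with the linearity of $\tr(\Omega\sigma)$ in $\sigma$. Given any feasible mixed $\sigma$, I spectrally decompose $\sigma = \sum_k p_k \ket{\phi_k}\bra{\phi_k}$ and write each eigenvector as $\ket{\phi_k} = \alpha_k\ket{\psi} + \beta_k \ket{\psi^\perp_k}$, with $\ket{\psi^\perp_k}\perp\ket{\psi}$, $|\alpha_k|^2 = f_k$, and $|\beta_k|^2 = 1 - f_k$.

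Using $\Omega\ket{\psi}=\ket{\psi}$, the cross terms $\braket{\psi|\Omega|\psi^\perp_k}=0$ drop out, yielding $\braket{\phi_k|\Omega|\phi_k} = f_k + (1-f_k)\braket{\psi^\perp_k|\Omega|\psi^\perp_k}$. Letting $\nu_{\max}$ denote the largest eigenvalue of $\Omega$ restricted to the subspace $\ket{\psi}^\perp$, each summand is then at most $f_k + (1-f_k)\nu_{\max}$; averaging against $p_k$ and using the adversary's constraint $\sum_k p_k f_k \le 1 - \epsilon$ yields
\begin{equation*}
\tr(\Omega\sigma) \le \nu_{\max} + (1-\nu_{\max})\sum_k p_k f_k \le (1-\epsilon) + \epsilon\,\nu_{\max}.
\end{equation*}

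To finish I would exhibit a single pure state meeting this bound: take $\ket{\phi_*} = \sqrt{1-\epsilon}\,\ket{\psi} + \sqrt{\epsilon}\,\ket{\psi^\perp_*}$, where $\ket{\psi^\perp_*}$ is any top eigenvector of $\Omega$ restricted to $\ket{\psi}^\perp$. Then $|\braket{\psi|\phi_*}|^2 = 1 - \epsilon$ (so $\ket{\phi_*}$ is feasible) and $\braket{\phi_*|\Omega|\phi_*} = (1-\epsilon) + \epsilon\,\nu_{\max}$, saturating the bound. Hence no mixed $\sigma$ beats $\ket{\phi_*}$, and the inner maximum in~(\ref{eq:minsigma}) is attained at a pure state.

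I do not anticipate a substantive obstacle: the argument is essentially a one-step convex relaxation leaning on the already-established constraint $\Omega\ket{\psi}=\ket{\psi}$. The only minor care needed is in the degenerate case where $\Omega|_{\ket{\psi}^\perp}$ has a repeated top eigenvalue---any eigenvector from the top eigenspace suffices---and the trivial boundary case $\epsilon = 0$, for which $\sigma = \ket{\psi}\bra{\psi}$ works.
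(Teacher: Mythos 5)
Your proof is correct, and it differs from the paper's mainly in how $\sigma$ is decomposed. The paper fixes the achieved fidelity $\braket{\psi|\sigma|\psi}=r$, writes $\sigma$ in block form relative to $\ket{\psi}$ and its orthogonal complement (diagonal blocks plus cross terms), notes that the acceptance probability depends only on the block supported on $\ket{\psi}^{\perp}$, and maximises that block to obtain $r+(1-r)c_{\max}$, attained by the pure state $\sqrt{r}\ket{\psi}+\sqrt{1-r}\ket{\psi^{\perp}_{\max}}$; the optimisation over the fidelity level (the saturation $\bar{\epsilon}=\epsilon$) is handled as a separate later step. You instead expand $\sigma$ in its spectral decomposition, bound each pure component using $\Omega\ket{\psi}=\ket{\psi}$ and the top eigenvalue $\nu_{\max}$ of $\Omega$ on $\ket{\psi}^{\perp}$, and then use convexity together with $\nu_{\max}\le 1$ (from $\Omega\le\Id$) to fold in the fidelity constraint, so your single bound $(1-\epsilon)+\epsilon\nu_{\max}$ proves both the purity reduction and the constraint saturation that the paper defers. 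Both arguments rest on the same two facts --- the cross terms vanish because $\ket{\psi}$ is a $+1$ eigenvector of $\Omega$, and the best orthogonal direction is the top eigenvector of $\Pi\Omega\Pi$ --- and both exhibit the same witness state; yours is slightly more compact and self-contained, while the paper's version keeps the fidelity $r$ general, which it reuses in the subsequent $\bar{\epsilon}$ discussion.
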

\begin{proof}
Assume the adversary chooses a fixed density matrix $\sigma$, which is globally optimal: it forces the verifier to accept $\sigma$ with the greatest probability among states $\sigma$ such that $\braket{\psi|\sigma|\psi} \coloneqq r \le 1-\epsilon$. The probability of accepting this $\sigma$ given strategy $\Omega$ is then
\begin{equation}
\Pr[\text{Accept } \sigma] = \tr(\Omega\sigma).
\end{equation}
We have asserted that $\Omega$ accepts $\ket{\psi}$ with certainty: $\bra{\psi}\Omega\ket{\psi}=1$. However, for this to be the case $\Omega$ must have $\ket{\psi}$ as an eigenstate with eigenvalue $1$; thus we can write
\begin{equation}
\Omega = \ket{\psi}\bra{\psi} + \sum_j c_j \ket{\psi^\bot_j}\bra{\psi^\bot_j}
\end{equation}
where the states $\{\ket{\psi^\bot_j}\}$ are a set of mutually orthogonal states orthogonal to $\ket{\psi}$. Then
\begin{align}
\Pr[\text{Accept } \sigma] &= \bra{\psi}\sigma\ket{\psi} + \sum_j c_j \bra{\psi^\bot_j}\sigma\ket{\psi^\bot_j}\\
&= r + \sum_j c_j \bra{\psi^\bot_j}\sigma\ket{\psi^\bot_j}.
\end{align}
We can write
\begin{equation}
\sigma = a\ket{\psi}\bra{\psi}+ b\sigma^\bot + c \ket{\psi}\bra{\Phi^\bot} + c^*\ket{\Phi^\bot}\bra{\psi},
\end{equation}
where $\sigma^\bot$ is a density matrix entirely supported in the subspace spanned by the states $\ket{\psi^\bot_j}$, and $\ket{\Phi^\bot}$ is a vector in the subspace spanned by $\ket{\psi^\bot_j}$. We know that $a=r$ as $\bra{\psi}\sigma\ket{\psi} = r$, and $b= 1-r$ as $\tr(\sigma) = 1$. Now, note that the probability of accepting $\sigma$ does not depend on the choice of $\ket{\Phi^\bot}$. Thus $\tr(\Omega\sigma)$ is maximised when $\sigma^\bot = \ket{\psi^\bot_{max}}\bra{\psi^\bot_{max}}$, where $\ket{\psi^\bot_{max}}$ is the orthogonal state in the spectral decomposition of $\Omega$ with largest eigenvalue, $c_{max}$. Thus
\begin{equation}
\max_\sigma \tr(\Omega\sigma) = r + (1-r)c_{max},
\end{equation}
which is achieved by any density matrix of the form
\begin{equation}
\sigma = r\ket{\psi}\bra{\psi}+ (1-r)\ket{\psi^\bot_{max}}\bra{\psi^\bot_{max}} + c \ket{\psi}\bra{\Phi^\bot} + c^*\ket{\Phi^\bot}\bra{\psi}.
\end{equation}
Note that the pure state $\sigma = \ket{\phi}\bra{\phi}$ for $\ket{\phi} = \sqrt{r}\ket{\psi}+\sqrt{1-r}\ket{\psi^\bot_{max}}$ is of this form, and so we can assume that the adversary makes this choice.
\end{proof}

Given that the state $\sigma$ can be taken to be pure and that the fidelity $F(\ket{\psi},\sigma) \le 1- \epsilon$, we write $\sigma = \ket{\psi_{\bar{\epsilon}}}\bra{\psi_{\bar{\epsilon}}}$, where $\ket{\psi_{\bar{\epsilon}}} := \sqrt{1-\bar{\epsilon}}\ket{\psi} +\sqrt{\bar{\epsilon}} \ket{\psi^\bot}$ and $\braket{\psi|\psi^\bot}=0$, for some $\bar{\epsilon}\ge\epsilon$ chosen by the adversary, to be optimised later. Denote 
\begin{equation}
\min_\Omega \max_{\substack{\sigma \\ \bra{\psi}\sigma\ket{\psi}\le 1-\epsilon}} \tr(\Omega \sigma) \coloneqq 1 - \Delta_\epsilon.
\end{equation}
Then the optimisation problem becomes to determine $\Delta_\epsilon$, where
\begin{align}\label{eq:optimisation_arbitrary}
&\Delta_\epsilon = \max_\Omega \min_{\ket{\psi^\bot}, \bar{\epsilon} \geq \epsilon} \bar{\epsilon} (1-\bra{\psi^\bot}\Omega\ket{\psi^\bot}) - 2\sqrt{\bar{\epsilon}(1-\bar{\epsilon})}\text{Re}(\bra{\psi}\Omega\ket{\psi^\bot})\\
\nonumber &\text{and } \Omega\ket{\psi}=\ket{\psi}.
\end{align}
This expression can be simplified given that $\Omega\ket{\psi}=\ket{\psi}$. In particular, we then know that $\bra{\psi^\bot}\Omega\ket{\psi}=0$ for any choice of orthogonal state $\ket{\psi^\bot}$. Thus the term $\sqrt{\bar{\epsilon}(1-\bar{\epsilon})}\text{Re}(\bra{\psi}\Omega\ket{\psi^\bot})$ automatically vanishes. We are then left with the optimisation
\begin{align}
& \Delta_\epsilon = \max_\Omega \min_{\ket{\psi^\bot}, \bar{\epsilon} \geq \epsilon} \bar{\epsilon} (1-\bra{\psi^\bot}\Omega\ket{\psi^\bot}),\\
\nonumber &\text{where } \Omega\ket{\psi}=\ket{\psi}.
\end{align}
As for the optimisation of $\bar{\epsilon}$, note that it is the goal of the adversary to make $\Delta_\epsilon$ as small as possible; and so they are obliged to set $\bar{\epsilon}=\epsilon$. Then the optimisation becomes
\begin{align}\label{eq:final_optimisation}
\Delta_\epsilon = \epsilon &\max_\Omega \min_{\ket{\psi^\bot}} (1-\bra{\psi^\bot}\Omega\ket{\psi^\bot}),\\
\nonumber &\text{where } \Omega\ket{\psi}=\ket{\psi}.
\end{align}

Note that this expression implies that any $\Omega$ where $\Omega\ket{\psi}=\ket{\psi}$ automatically satisfies the \emph{future-proofing} property: firstly that $\Omega$ is independent of $\epsilon$, but also that the strategy must be viable for any choice of $\epsilon$ (i.e. there must not be a choice of $\epsilon$ where $\Delta_{\epsilon}=0$). For an initial choice $\Delta_{\epsilon}>0$, we have that $1-\bra{\psi^\bot}\Omega\ket{\psi^\bot}>0$ and so $\Delta_{\epsilon'}>0$ for any $0 < \epsilon' < \epsilon$. Thus the verifier is free to decrease $\epsilon$ arbitrarily without fear of the strategy failing. Note also that this condition may not be automatically guaranteed if the verifier chooses an $\Omega$ such that $\Omega\ket{\psi} \neq \ket{\psi}$.

Regarding the optimisation problem in~\ref{eq:final_optimisation}, for an arbitrary state $\ket{\psi}$ on $n$ qubits it is far from clear how to: (a) construct families of viable $\Omega$ (built from local projective measurements) that accept $\ket{\psi}$ with certainty; (b) to then solve this optimisation problem over those families of $\Omega$. For the remainder of this work, we focus on states of particular experimental interest where we can solve the problem: arbitrary states of 2 qubits, and stabilizer states.


\section{States of two qubits}\label{app:two_qubits}

We now derive the optimal verification strategy for an arbitrary pure state of two qubits. We first give the proof of the statement in the main text that optimal strategies for locally equivalent states are easily derived by conjugating the strategy with the local map that takes one state to the other. Hence, we can restrict our consideration to verifying states of the form $\ket{\psi}=\sin\theta\ket{00}+\cos\theta\ket{11}$ without loss of generality. Specifically:

\begin{lem}
Given any two qubit state $\ket{\psi}$ with optimal strategy $\Omega_{opt}$, a locally equivalent state $(U \otimes V)\ket{\psi}$ has optimal strategy $(U \otimes V) \Omega_{opt} (U \otimes V)^\dagger$.
\end{lem}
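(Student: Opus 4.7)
The plan is to exploit the covariance of the verification optimization under local unitaries. Writing $\ket{\phi} := (U \otimes V)\ket{\psi}$ and $W := U \otimes V$, I would show that conjugation by $W$ induces a bijection between feasible strategies (and adversarial states) for $\ket{\psi}$ and those for $\ket{\phi}$ that preserves the objective function in Eq.~(\ref{eq:final_optimisation}), so that optima are transported to optima.

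First I would verify that $\Omega' := W \Omega_{opt} W^\dagger$ satisfies the structural requirements for a valid strategy for $\ket{\phi}$. Writing $\Omega_{opt} = \sum_j \mu_j P_j$ with each $P_j$ a local projector of the form $P_j = A_j \otimes B_j$, conjugation gives $\Omega' = \sum_j \mu_j (U A_j U^\dagger)\otimes(V B_j V^\dagger)$, which is again a convex combination of local projective measurements, since the conjugate of a projector by a unitary is a projector and tensor product structure is preserved by $W$. Moreover, $\Omega'\ket{\phi} = W \Omega_{opt} W^\dagger W \ket{\psi} = W \Omega_{opt}\ket{\psi} = W\ket{\psi} = \ket{\phi}$, so $\Omega'$ accepts $\ket{\phi}$ with certainty.

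Next I would show that the minimax values match. For any state $\sigma$ with $\braket{\phi|\sigma|\phi} \le 1-\epsilon$, define $\tilde\sigma := W^\dagger \sigma W$; this is again a density operator, and $\braket{\psi|\tilde\sigma|\psi} = \braket{\phi|\sigma|\phi} \le 1-\epsilon$, so the map $\sigma \mapsto \tilde\sigma$ is a bijection between the adversary's feasible sets for $\ket{\phi}$ and for $\ket{\psi}$. By cyclicity,
\begin{equation}
\tr(\Omega' \sigma) = \tr(W \Omega_{opt} W^\dagger \sigma) = \tr(\Omega_{opt}\, W^\dagger \sigma W) = \tr(\Omega_{opt}\tilde\sigma).
\end{equation}
The same bijection $\Omega \mapsto W^\dagger \Omega W$ relates strategies for $\ket{\phi}$ to strategies for $\ket{\psi}$ (and preserves locality, the accept-with-certainty condition, and trace values by the same argument). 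Hence the inner maximization in Eq.~(\ref{eq:minsigma}) over adversarial $\sigma$ equals the corresponding maximization for $\ket{\psi}$ with strategy $\Omega_{opt}$, and the outer minimization over local strategies for $\ket{\phi}$ is attained precisely at $\Omega'$ because $\Omega_{opt}$ is optimal for $\ket{\psi}$.

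There is no real obstacle here; the only point requiring care is to verify that the bijections on strategies and on adversarial states both respect the structural constraints (locality and projectivity for strategies, the fidelity bound for states), which they do because $W$ is a local unitary. The argument is then essentially a one-line symmetry observation: the pair (allowed strategies, allowed adversarial states) is invariant under the action $X \mapsto W X W^\dagger$, the objective $\tr(\Omega\sigma)$ is invariant under simultaneous conjugation of both arguments, and the target state $\ket{\psi}$ is mapped to $\ket{\phi}$, so optima transform as claimed.
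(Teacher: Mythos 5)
Your proposal is correct and follows essentially the same route as the paper: check that conjugation by the local unitary preserves validity (locality, projectivity, and acceptance of the target with certainty) and then transport optimality via the symmetry of the objective under simultaneous conjugation, which is logically the same as the paper's contradiction argument (a better strategy for $(U\otimes V)\ket{\psi}$ would conjugate back to a better strategy for $\ket{\psi}$). The only cosmetic difference is that you phrase invariance directly in terms of the minimax over adversarial density matrices, whereas the paper works with the reduced quantity $q = \max_{\ket{\psi^\bot}}\bra{\psi^\bot}\Omega\ket{\psi^\bot}$; both capture the same covariance.
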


\begin{proof}
We must show that strategy $\Omega' = (U \otimes V) \Omega_{opt} (U \otimes V)^\dagger$ is both a valid strategy, and is optimal for verifying $\ket{\psi'}=(U \otimes V)\ket{\psi}$.

\noindent \emph{Validity}: If $\Omega_{opt} = \sum_j \mu_j P_j$ is a convex combination of local projectors, then so is $\Omega'$:
\begin{align}
\nonumber \Omega' = (U \otimes V)\Omega(U \otimes V)^\dagger &= \sum_j \mu_j (U \otimes V)P_j(U \otimes V)^\dagger\\ &= \sum_j \mu_j P'_j.
\end{align}
Also, if $\Omega_{opt}\ket{\psi}=\ket{\psi}$  then $\Omega'\ket{\psi'}=\ket{\psi'}$:
\begin{align}
\Omega_{opt}\ket{\psi} = \ket{\psi} &\Rightarrow (U \otimes V)\Omega\ket{\psi}=p_{opt}(U \otimes V)\ket{\psi}\\
\nonumber &\Rightarrow (U \otimes V)\Omega(U \otimes V)^\dagger(U \otimes V)\ket{\psi}=(U \otimes V)\ket{\psi}\\
\nonumber &\Rightarrow \Omega' \ket{\psi'} = \ket{\psi'}.
\end{align}
\emph{Optimality}:
The performance of a strategy is determined by the maximum probability of accepting an orthogonal state $\ket{\psi^\bot}$. For the strategy-state pairs $(\Omega_{opt},\ket{\psi})$ and $(\Omega',\ket{\psi'})$, we denote this parameter $q_{opt}$ and $q'$, respectively. Then
\begin{align}
    q_{opt} &= \max_{\ket{\psi^\bot}} \bra{\psi^\bot} \Omega_{opt} \ket{\psi^\bot} = \max_{\ket{\phi}, \braket{\psi | \phi} = 0} \bra{\phi} \Omega_{opt} \ket{\phi}\\
    &= \max_{(U \otimes V)\ket{\phi}, \bra{\psi}(U \otimes V)^\dagger(U \otimes V)\ket{\phi} = 0} \bra{\phi}(U \otimes V)^\dagger (U \otimes V) \Omega_{opt} (U \otimes V)^\dagger (U \otimes V)\ket{\phi}\\
    &= \max_{\ket{\phi'}, \braket{\psi' | \phi'} = 0} \bra{\phi'} \Omega' \ket{\phi'} = q'.
\end{align}
So applying the same local rotation to the strategy and the state results in no change in the performance of the strategy. Thus the following simple proof by contradiction holds: assume that there is a better strategy for verifying $\ket{\psi'}$, denoted $\Omega''$. But then the strategy $(U \otimes V)^\dagger \Omega'' (U \otimes V)$ must have a better performance for verifying $\ket{\psi}$ than $\Omega_{opt}$, which is a contradiction. Thus $\Omega'$ must be the optimal strategy for verifying $\ket{\psi'}$.
\end{proof}

\noindent We will now prove Theorem~1 from the main body. However, we first prove a useful lemma - that no optimal strategy can contain the identity measurement (where the verifier always accepts regardless of the tested state). In the following discussion, we denote the projector $\Pi \coloneqq \Id - \ket{\psi}\bra{\psi}$. For a strategy $\Omega$ where $\Omega\ket{\psi}=\ket{\psi}$, the quantity of interest which determines $\Delta_\epsilon$ in (\ref{eq:final_optimisation}) is the maximum probability of accepting an orthogonal state $\ket{\psi^\bot}$:
\begin{equation}
q \coloneqq \Vert \Pi \Omega \Pi \Vert = \max_{\ket{\psi^\bot}} \bra{\psi^\bot}\Omega\ket{\psi^\bot}. 
\end{equation}
If a strategy is augmented with an accent or subscript, the parameter $q$ inherits that accent or subscript.


\begin{lem}\label{lem:no_id_part}
Consider an operator $0 \leq \Omega \leq 1$, $\Omega\ket{\psi}=\ket{\psi}$ of the form $\Omega = (1-\alpha)\Omega_1 + \alpha \Id$ for $0 \leq \alpha \leq 1$. Then $q \geq q_1$.  
\end{lem}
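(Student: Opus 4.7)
The plan is to compute $q$ directly from its definition using the additive structure of $\Omega$, and then compare algebraically with $q_1$. The key observation is that adding a multiple of the identity to a strategy shifts every diagonal matrix element by the same constant, so the maximum over orthogonal states is shifted by that same constant; the resulting linear expression in $q_1$ then immediately yields the inequality.

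Concretely, I would proceed as follows. For any unit vector $\ket{\psi^\bot}$ with $\braket{\psi|\psi^\bot}=0$, linearity of the matrix element gives
\begin{equation}
\bra{\psi^\bot}\Omega\ket{\psi^\bot} = (1-\alpha)\bra{\psi^\bot}\Omega_1\ket{\psi^\bot} + \alpha.
\end{equation}
Since $(1-\alpha)\geq 0$ and the additive $\alpha$ is independent of $\ket{\psi^\bot}$, taking the maximum over the unit sphere orthogonal to $\ket{\psi}$ commutes with both operations, yielding
\begin{equation}
q = (1-\alpha)\,q_1 + \alpha.
\end{equation}
Observe in passing that the hypothesis $\Omega\ket{\psi}=\ket{\psi}$ forces $\Omega_1\ket{\psi}=\ket{\psi}$ (at least when $\alpha<1$; the case $\alpha=1$ gives $\Omega=\Id$, $q=1$, and the conclusion is immediate), so $q_1$ is indeed well-defined.

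The final step is to rewrite $q - q_1 = \alpha(1-q_1)$ and note this is non-negative. For this I need $q_1\le 1$, which I would extract from the standing assumption $\Omega\le \Id$: rearranging $\Omega = (1-\alpha)\Omega_1 + \alpha\Id$ gives $(1-\alpha)\Omega_1 \le (1-\alpha)\Id$, hence $\Omega_1\le \Id$, so $\bra{\psi^\bot}\Omega_1\ket{\psi^\bot}\le 1$ for every unit vector. Since both factors $\alpha$ and $1-q_1$ are non-negative, $q\ge q_1$ as claimed. There is really no obstacle here — the lemma is essentially a one-line computation once the right decomposition is written down. Its operational content is worth recording, though: mixing any identity component into a strategy can only help the adversary, since the identity outcome accepts every state with certainty.
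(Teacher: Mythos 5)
Your proof is correct and follows essentially the same route as the paper's: compute $\bra{\psi^\bot}\Omega\ket{\psi^\bot} = (1-\alpha)\bra{\psi^\bot}\Omega_1\ket{\psi^\bot} + \alpha$, maximise over orthogonal states to get $q = (1-\alpha)q_1 + \alpha$, and conclude $q \ge q_1$. The only difference is that you make explicit the step $q_1 \le 1$ (via $\Omega \le \Id \Rightarrow \Omega_1 \le \Id$) and the degenerate case $\alpha = 1$, which the paper leaves implicit; this is a welcome but minor addition.
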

\begin{proof}
For arbitrary $\ket{\psi^\perp}$ such that $\braket{\psi|\psi^\perp}=0$, $\braket{\psi^\perp|\Omega|\psi^\perp} = (1-\alpha) \braket{\psi^\perp|\Omega_1|\psi^\perp} + \alpha$. This is maximised by choosing $\ket{\psi^\perp}$ such that $\braket{\psi^\perp|\Omega_1|\psi^\perp} = q_1$, giving $q = (1-\alpha) q_1 + \alpha \ge q_1$.
\end{proof}


\noindent We are now in a position to prove Theorem~1. Note that the special cases where $\ket{\psi}$ is a product state ($\theta = 0$ or $\frac{\pi}{2}$) or a Bell state ($\theta = \frac{\pi}{4}$) are treated separately.

\begin{repthm}{thm:two_qubits}
Any optimal strategy for verifying a state of the form $\ket{\psi}=\sin\theta\ket{00}+\cos\theta\ket{11}$ for $0 < \theta < \frac{\pi}{2}$, $\theta \neq \frac{\pi}{4}$ that accepts $\ket{\psi_\theta}$ with certainty and satisfies the properties of locality, trust and projective measurement, can be expressed as a strategy involving four measurement settings:
\begin{equation}
\Omega^{opt} = \frac{2-\sin(2\theta)}{4+\sin(2\theta)}P^+_{ZZ} + \frac{2(1+\sin(2\theta))}{3(4+\sin(2\theta))}\sum_{k=1}^3 (\Id - \ket{\phi_k}\bra{\phi_k}),
\end{equation}
where the states $\ket{\phi_k}$ are
\begin{align}
\ket{\phi_1} &= \left(\frac{1}{\sqrt{1+\tan\theta}}\ket{0} + \frac{e^{\frac{2\pi i}{3}}}{\sqrt{1+\cot\theta}}\ket{1} \right) \otimes \left(\frac{1}{\sqrt{1+\tan\theta}}\ket{0} + \frac{e^{\frac{\pi i}{3}}}{\sqrt{1+\cot\theta}}\ket{1} \right),\\
\ket{\phi_2} &= \left(\frac{1}{\sqrt{1+\tan\theta}}\ket{0} + \frac{e^{\frac{4\pi i}{3}}}{\sqrt{1+\cot\theta}}\ket{1} \right) \otimes \left(\frac{1}{\sqrt{1+\tan\theta}}\ket{0} + \frac{e^{\frac{5\pi i}{3}}}{\sqrt{1+\cot\theta}}\ket{1} \right),\\
\ket{\phi_3} &= \left(\frac{1}{\sqrt{1+\tan\theta}}\ket{0} + \frac{1}{\sqrt{1+\cot\theta}}\ket{1} \right) \otimes \left(\frac{1}{\sqrt{1+\tan\theta}}\ket{0} - \frac{1}{\sqrt{1+\cot\theta}}\ket{1} \right).
\end{align}
The number of measurements needed to verify to within fidelity $\epsilon$ and statistical power $1-\delta$ is
\begin{equation}
n_{opt} \approx (2+\sin\theta\cos\theta)\epsilon^{-1}\ln\delta^{-1}.
\end{equation}
\end{repthm}

\begin{proof}

The strategy $\Omega$ can be written as a convex combination of local projectors. We can group the projectors by their action according to two local parties, Alice and Bob, and then it must be expressible as a convex combination of five types of terms, grouped by trace:

\begin{equation}\label{eq:strat_general}
    \Omega = c_1 \sum_i \mu_i (\rho^i_1 \otimes \sigma^i_1) + c_2 \sum_j \nu_j (\rho^j_2 \otimes \sigma^j_2 + \rho_2^{j\bot} \otimes \sigma_2^{j\bot}) + c_3\sum_k \eta_k (\Id - \rho^k_3 \otimes \sigma^k_3)+c_4 \sum_l [\zeta_l (\rho^l_4 \otimes \Id) + \xi_l(\Id \otimes \sigma^l_4)]+c_5\Id \otimes \Id,
\end{equation}
where $\rho^k_i$ and $\sigma^k_i$ are single-qubit pure states and the subscript denotes the type of term in question. The state $\rho^{j\bot}$ is the density matrix defined by $\tr(\rho^j \rho^{j\bot})=0$. Qualitatively, given two local parties Alice and Bob with access to one qubit each, and projectors with outcomes $\{\lambda,\bar{\lambda}\}$, the terms above correspond to the following strategies: (1) Alice and Bob both apply a projective measurement and accept if both outcomes are $\lambda$; (2) Alice and Bob both apply a projective measurement and accept if both outcomes agree; (3) Alice and Bob both apply a projective measurement and accept unless both outcomes are $\lambda$; (4) Alice or Bob applies a projective measurement and accepts on outcome $\lambda$, and the other party abstains; and (5) both Alice and Bob accept without applying a measurement. 

We show in Appendix~\ref{app:hypothesis_testing} that strategies that accept $\ket{\psi}$ with certainty have a quadratic advantage in scaling in terms of epsilon. Given this, we enforce this constraint from the outset and then show that a viable strategy can still be constructed. For the general strategy in Eq.~\ref{eq:strat_general} to accept $\ket{\psi}$ with certainty, each term in its expansion must accept $\ket{\psi}$ with certainty. However, this is impossible to achieve for some of the terms in the above expansion. In particular, we show that the terms $(\rho \otimes \sigma)$, $(\rho \otimes \Id)$ and $(\Id \otimes \sigma)$ cannot accept $\ket{\psi}$ with certainty, and the form of the term  $(\rho \otimes \sigma + \rho^\bot \otimes \sigma^\bot)$ is restricted.

$\mathit{(\rho \otimes \sigma)}$: given that $\rho$ and $\sigma$ are pure, write $\rho \otimes \sigma = \ket{u}\bra{u} \otimes \ket{v}\bra{v}$, and so this term only accepts $\ket{\psi}$ with certainty if $\Vert (\ket{u}\bra{u} \otimes \ket{v}\bra{v}) \ket{\psi} \Vert = 1$. However, for $0 < \theta < \frac{\pi}{2}$ the state $\ket{\psi}$ is entangled and this condition cannot be satisfied.

$\mathit{(\rho \otimes \Id)}$ or $\mathit{(\Id \otimes \sigma)}$: For the term $(\rho \otimes \Id)$, reexpress $\rho$ in terms of its Pauli expansion: $\rho \otimes \Id = \frac{1}{2}(\Id+\alpha X + \beta Y + \gamma Z) \otimes \Id$, for $-1 \leq \alpha,\beta,\gamma \leq 1$. Then the condition that this term accepts with probability $p=1$ is
\begin{equation}
\bra{\psi} \frac{1}{2}(\Id+\alpha X + \beta Y + \gamma Z) \otimes \Id\ket{\psi} = 1.
\end{equation}
By inserting the definition of $\ket{\psi}$, this becomes $\frac{1}{2}(1-\gamma\cos(2\theta)) = 1$, which is unsatisfiable for $0<\theta<\frac{\pi}{2}$. It is readily checkable that an identical condition is derived for the term $\Id \otimes \sigma$, given the symmetry of the state $\ket{\psi}$ under swapping.

$\mathit{(\rho \otimes \sigma + \rho^\bot \otimes \sigma^\bot)}$: for this term, we can expand both $\rho$ and $\sigma$ in terms of Pauli operators:
\begin{alignat}{2}
\rho & = \frac{1}{2}(\Id + \alpha X + \beta Y + \gamma Z); \quad &&\rho^\bot = \frac{1}{2}(\Id - \alpha X - \beta Y - \gamma Z)\\
\sigma & = \frac{1}{2}(\Id + \alpha' X + \beta' Y + \gamma' Z); \quad &&\sigma^\bot = \frac{1}{2}(\Id - \alpha' X - \beta' Y - \gamma' Z).
\end{alignat}
Inserting these expressions and the definition of $\ket{\psi}$ into the condition that $p=1$ gives the constraint
\begin{equation}\label{eq:rank2_p=1}
\gamma\gamma' +(\alpha\alpha' - \beta\beta')\sin(2\theta) = 1.
\end{equation}
Now, we know from the Cauchy-Schwarz inequality that
\begin{equation}
\gamma\gamma' +(\alpha\alpha' - \beta\beta')\sin(2\theta) \leq \sqrt{\alpha'^2 + \beta'^2 +\gamma'^2}\sqrt{\alpha^2\sin^2(2\theta) + \beta^2\sin^2(2\theta) +\gamma^2} \leq 1,
\end{equation}
where the second inequality is derived from the fact that $\{\alpha,\beta,\gamma\}$, $\{\alpha',\beta',\gamma'\}$ are the parameterisation of a pair of density matrices. There are two ways that this inequality can be saturated: (a) $\sin(2\theta)=1$; (b) $\alpha\alpha'-\beta\beta'=0$, $\gamma\gamma' = 1$. In all other cases, the inequality is strict. Thus the constraint in Eq.~\ref{eq:rank2_p=1} cannot be satisfied in general. Exception (a) corresponds to $\theta = \frac{\pi}{4}$, which is omitted from this proof and treated separately. In exception (b), we have that $\gamma\gamma' = 1$ and so either $\gamma=\gamma'=1$ or $\gamma=\gamma'=-1$. In both cases we have that
\begin{equation}
\rho \otimes \sigma + \rho^\bot \otimes \sigma^\bot =\left( \frac{\Id + Z}{2}\otimes\frac{\Id+Z}{2}\right) + \left(\frac{\Id - Z}{2}\otimes\frac{\Id-Z}{2}\right) = P^+_{ZZ},
\end{equation}
where $P^+_{ZZ}$ is the projector onto the positive eigenspace of $ZZ$. This is the only possible choice for this particular term that accepts $\ket{\psi}$ with certainty.

We can also make use of Lemma~\ref{lem:no_id_part} to remove the term $\Id \otimes \Id$. Given this and the restrictions above from enforcing that $p=1$, the measurement strategy can be written
\begin{equation}
\Omega = \alpha P^+_{ZZ} + (1-\alpha)\sum_k \eta_k (\Id - \rho_k \otimes \sigma_k),
\end{equation}
where $\sum_k \eta_k = 1$ and $0 \le \alpha \le 1$.

We'll try to further narrow down the form of this strategy by \emph{averaging}; i.e. by noting that, as $\ket{\psi}$ is an eigenstate of a matrix $M_\zeta \otimes M_{-\zeta}$ where
\begin{equation}
M_{\zeta} = \begin{pmatrix}1&0\\0&e^{-i\zeta}\end{pmatrix},
\end{equation}
then conjugating the strategy by $M_\zeta \otimes M_{-\zeta}$ and integrating over all possible $\zeta$ cannot make the strategy worse; if we consider an averaged strategy $\langle \Omega \rangle$ such that
\begin{equation}
\langle\Omega\rangle = \frac{1}{2\pi}\int_{-\pi}^{\pi} d\zeta (M_\zeta \otimes M_{-\zeta})\Omega(M_{-\zeta} \otimes M_{\zeta}),
\end{equation}
then necessarily the performance of $\langle\Omega\rangle$ cannot be worse than that of $\Omega$. To see this, note that the averaging procedure does not affect the probability of accepting the state $\ket{\psi}$. However, for each particular value of $\zeta$ the optimisation for the adversary may necessarily lead to different choices for the orthogonal states $\ket{\psi^\bot (\zeta)}$, and so averaging over $\zeta$ cannot be better for the adversary than choosing the optimal $\ket{\psi^\bot}$ at $\zeta = 0$. 

We can also consider discrete symmetries of the state $\ket{\psi}$. In particular, $\ket{\psi}$ is invariant under both swapping the two qubits, and complex conjugation (with respect to the standard basis); by the same argument, averaging over these symmetries (i.e.\ by considering $\Omega' = \frac{1}{2}(\Omega + (\text{SWAP})\Omega(\text{SWAP}^\dagger))$ and $\Omega'' = \frac{1}{2}(\Omega+\Omega^*)$) cannot produce strategies inferior to the original $\Omega$. Therefore we can consider a strategy averaged over these families of symmetries of $\Omega$, without any loss in performance.

This averaging process is useful for three reasons. Firstly, it heavily restricts the number of free parameters in $\Omega$ requiring optimisation. Secondly, it allows us to be explicit about the general form of $\Omega$. Thirdly, the averaging procedures are distributive over addition; and so we can make the replacement
\begin{align}
\nonumber \Omega = \alpha P^+_{ZZ} + (1-\alpha)\sum_k \eta_k (\Id - \rho_k \otimes \sigma_k) \rightarrow  \langle &\alpha P^+_{ZZ} + (1-\alpha)\sum_k \eta_k (\Id - \rho_k \otimes \sigma_k) \rangle\\
=  &\alpha P^+_{ZZ} + (1-\alpha)\sum_k \eta_k \langle \Id - \rho_k \otimes \sigma_k \rangle.
\end{align}
Note that a single term $\Id - \rho_k \otimes \sigma_k$, may, after averaging, be a convex combination of multiple terms of the form $\Id - \rho \otimes \sigma$. To proceed, we will use this averaging procedure to show that it suffices to only include a single, post-averaging term of the form $\langle \Id - \rho_k \otimes \sigma_k \rangle$ in the strategy $\Omega$, and that the resulting operator can be explicitly decomposed into exactly three measurement settings.

Consider a general operator $\Omega$, expressed as a $4 \times 4$ matrix. First, take the discrete symmetries of $\ket{\psi}$. Averaging over complex conjugation in the standard basis implies that the coefficients of $\langle \Omega \rangle$ are real; and averaging over qubit swapping implies that $\langle \Omega \rangle$ is symmetric with respect to swapping of the two qubits. Denote the operator after averaging these discrete symmetries as $\bar{\Omega}$. Then consider averaging over the continuous symmetry of $\ket{\psi}$:
\begin{align}
\langle \Omega \rangle &= \frac{1}{2\pi}\int_{-\pi}^{\pi}d\zeta (M_{\zeta}\otimes M_{-\zeta})\bar{\Omega}(M_{-\zeta} \otimes M_{\zeta})\\
&= \frac{1}{2\pi}\int_{-\pi}^{\pi}d\zeta
\begin{pmatrix}
1&0&0&0 \\ 0&e^{i\zeta}&0&0 \\ 0&0&e^{-i\zeta}&0 \\ 0&0&0&1
\end{pmatrix}
\begin{pmatrix}\omega_{00}&\omega_{01}&\omega_{01}&\omega_{03}\\\omega_{01}&\omega_{11}&\omega_{12}&\omega_{13}\\\omega_{01}&\omega_{12}&\omega_{11}&\omega_{13}\\\omega_{03}&\omega_{13}&\omega_{13}&\omega_{33}\end{pmatrix}
\begin{pmatrix}
1&0&0&0 \\ 0&e^{-i\zeta}&0&0 \\ 0&0&e^{i\zeta}&0 \\ 0&0&0&1
\end{pmatrix}\\
&= \begin{pmatrix}\omega_{00}&0&0&\omega_{03}\\0&\omega_{11}&0&0\\0&0&\omega_{11}&0\\\omega_{03}&0&0&\omega_{33}\end{pmatrix}.
\end{align}
Thus after averaging using the above symmetries of $\ket{\psi}$, $\langle \Omega \rangle$ can be written in the standard basis as
\begin{equation}\label{eq:omega_averaged}
\langle \Omega \rangle = \begin{pmatrix}a&0&0&b\\0&c&0&0\\0&0&c&0\\b&0&0&d\end{pmatrix},
\end{equation}
for $a,b,c,d \in \mathbb{R}$. Enforcing that the strategy accepts $\ket{\psi}$ with certainty yields $\langle\Omega\rangle\ket{\psi} =\ket{\psi}$, or explicitly that

\begin{equation}
\langle \Omega \rangle = \begin{pmatrix}1-b\cot\theta&0&0&b\\0&c&0&0\\0&0&c&0\\b&0&0&1-b\tan\theta\end{pmatrix}.
\end{equation}
The eigensystem of this operator is then completely specified; besides $\ket{\psi}$, it has the following eigenvectors:
\begin{equation}\label{eq:omega_eigenvectors}
\ket{v_1} = \cos\theta\ket{00}-\sin\theta\ket{11}; \quad \ket{v_2} = \ket{01}; \quad \ket{v_3}=\ket{10},
\end{equation}
with corresponding eigenvalues $\lambda_1 = 1 - b\csc\theta\sec\theta$ and $\lambda_2 = \lambda_3 = c$. The maximum probability of accepting a state orthogonal to $\ket{\psi}$, $q$, can then be written
\begin{equation}
q = \Vert \Pi \langle \Omega \rangle \Pi \Vert = \max \{\lambda_1,\lambda_2\},
\end{equation}
where $\Pi = \Id - \ket{\psi}\bra{\psi}$. Therefore, any reasoning about $q$ can be reduced to reasoning about the pair $(\lambda_1,\lambda_2)$. 

Now, we will show that it suffices to only consider a single term of the form $\langle \Id-\rho_k \otimes \sigma_k \rangle$ in the decomposition of $\Omega$. We write a strategy of this form as
\begin{equation}\label{eq:strat_tr3}
\Omega = \alpha P^+_{ZZ} + (1-\alpha)\langle \Id - \rho \otimes \sigma \rangle.
\end{equation}
For the term $\langle \Id - \rho \otimes \sigma \rangle$, we have a constraint on the trace; if we label the eigenvalues for this term as $\lambda_1^{(3)}$ and $\lambda_2^{(3)}$, we have the constraint that $1+\lambda^{(3)}_1 + 2\lambda_2^{(3)} = \tr \langle\Id-\rho\otimes\sigma\rangle = 3 \Rightarrow \lambda_2^{(3)} = 1 - \frac{\lambda_1^{(3)}}{2}$. The locus of points satisfying this constraint is plotted in the $(\lambda_1,\lambda_2)$ plane as the thick black line in~ Fig.~\ref{fig:convhull}. Moreover, we will show that a single term of this form can achieve any valid choice of $\lambda_1^{(3)}$ on this locus (which we defer until we have an explicit parameterisation of terms of this type; see Eq.~\ref{eq:tr3_av}, below).

However, we also have an additional constraint derived from insisting that the strategy remains local. For example, the point $(0,1)$ in the $(\lambda_1,\lambda_2)$ plane represents the strategy $\Omega = \Id - \ket{v_1}\bra{v_1}$, which corresponds to the strategy where the verifier projects onto $\ket{v_1}$ and accepts if the outcome is not $\ket{v_1}$. But this type of measurement is operationally forbidden as $\ket{v_1}$ is entangled.

It can be readily checked that, for an arbitrary $\theta$, it is not possible to cover the full locus in the range $0 \leq \lambda_1 \leq 1$ with a separable strategy; instead, there is a fixed lower bound on $\lambda_1^{(3)}$. To see this, write
\begin{equation}
\langle \Id - \rho \otimes \sigma \rangle = \ket{\psi}\bra{\psi} + \lambda_1^{(3)} \ket{v_1}\bra{v_1}+\frac{2-\lambda_1^{(3)}}{2}(\ket{v_2}\bra{v_2}+\ket{v_3}\bra{v_3}).
\end{equation}
Then, taking just the $\langle \rho \otimes \sigma \rangle$ part and expressing as a matrix in the computational basis gives
\begin{equation}
\langle \rho \otimes \sigma \rangle = \begin{pmatrix}(1-\lambda_1^{(3)})\cos^2\theta&0&0&(\lambda_1^{(3)}-1)\cos\theta\sin\theta\\0&\frac{\lambda_1^{(3)}}{2}&0&0\\0&0&\frac{\lambda_1^{(3)}}{2}&0\\(\lambda_1^{(3)}-1)\cos\theta\sin\theta&0&0&(1-\lambda_1^{(3)})\sin^2\theta\end{pmatrix}.
\end{equation}
To enforce separability it is necessary and sufficient to check positivity under partial transposition, yielding the constraint $\lambda_1^{(3)} - (1-\lambda_1^{(3)})\sin(2\theta) \geq 0$. Simple rearrangement gives a lower bound that must be satisfied for the strategy to remain separable:
\begin{equation}\label{eq:ppt}
\lambda_1^{(3)} \geq \frac{\sin(2\theta)}{1+\sin(2\theta)} \coloneqq \lambda_{LB}.
\end{equation}
This additional locality constraint rules out any point on the black line to the left of the red point in Fig.~\ref{fig:convhull}. The term $P^+_{ZZ}$ has parameters $\lambda_1^{ZZ} = 1$, $\lambda_2^{ZZ}=0$ and so represents a single point in the $(\lambda_1,\lambda_2)$ plane. Thus the parameters $(\lambda_1,\lambda_2)$ for the full strategy $\Omega$ must be represented by a point in the convex hull of the single point representing the $P^+_{ZZ}$ term and the locus of points representing the trace 3 part - i.e. in the unshaded region in Fig.~\ref{fig:convhull}.

We now show that a strategy that includes more trace 3 terms cannot improve on the performance of the strategy above. Write this expanded strategy as
\begin{equation}\label{eq:strat_multi_tr3}
\Omega' = \alpha P^+_{ZZ} + (1-\alpha)\langle \sum_k \eta_k (\Id - \rho_k \otimes \sigma_k) \rangle,
\end{equation}
for $\sum_k \eta_k = 1$. Firstly, we note again that the averaging operations (SWAP, conjugation via $M_{\zeta}$ and complex conjugation in the standard basis) are distributive over addition and so we can make the replacement
\begin{equation}
\Omega' = \alpha P^+_{ZZ} + (1-\alpha) \sum_k \eta_k \langle \Id - \rho_k \otimes \sigma_k \rangle.
\end{equation}

Write the composite term $\sum_k \eta_k \langle \Id - \rho_k \otimes \sigma_k \rangle \coloneqq \Omega_{\text{comp}}$, with parameters $\lambda_1^{\text{comp}}$ and $\lambda_2^{\text{comp}}$. Note that each term in $\Omega_{\text{comp}}$ satisfies both the constraint from the trace and the constraint from PPT in~\ref{eq:ppt}, and hence so does $\Omega_{\text{comp}}$. Now, each operator in this term shares the same eigenbasis (namely, the set of states $\{\ket{v_i}\}$ in~\ref{eq:omega_eigenvectors}). Thus we know that $\lambda_1^{\text{comp}} = \sum_k \eta_k \lambda_{1,k}$, and likewise for $\lambda_2^{\text{comp}}$; i.e. the strategy parameters for this composite term are just a convex combination of those for its constituent parts. A term $\Omega_{\text{comp}}$ is then specified in the $(\lambda_1,\lambda_2)$ plane by a point $\mathcal{P}_{\text{comp}} = (\lambda_1^{\text{comp}},\lambda_2^{\text{comp}}) \in \text{Conv}(\lambda_{1,k},\lambda_{2,k})$ (i.e. the point $\mathcal{P}_{\text{comp}}$ must lie on the thick black line bounding the unshaded region in Fig.~\ref{fig:convhull}).

Thus we know that $\text{Conv}(\Omega') \subseteq \text{Conv}(\Omega)$, and so any strategy writeable in the form~\ref{eq:strat_multi_tr3} can be replaced by a strategy of the form~\ref{eq:strat_tr3} with identical parameters $(\lambda_1,\lambda_2)$, and hence identical performance. Thus, we need only consider strategies of the form
\begin{equation}
\Omega = \alpha P^+_{ZZ} + (1-\alpha) \langle \Id - \rho \otimes \sigma \rangle.
\end{equation}

We can now be explicit about the form of the above strategy. For $\Omega$ to accept $\ket{\psi}$ with certainty, $\rho \otimes \sigma$ must annihilate $\ket{\psi}$ and so we make the replacement $\rho \otimes \sigma = \ket{\tau}\bra{\tau}$, where $\ket{\tau}$ is the most general pure product state that annihilates $\ket{\psi}$. To be explicit about the form of the state $\ket{\tau}$, write a general two-qubit separable state as
\begin{equation}
\ket{\tau} = (\cos\phi \ket{0} + e^{i\eta}\sin\phi \ket{1}) \otimes (\cos\xi \ket{0} + e^{i\zeta}\sin\xi \ket{1}),
\end{equation}
where we take $0 \leq \phi, \xi \leq \frac{\pi}{2}$, without loss of generality. The constraint that this state annihilates $\ket{\psi}= \sin\theta\ket{00}+\cos\theta\ket{11}$ is 
\begin{equation}\label{eq:annil_constraint}
\cos\phi \cos\xi \sin\theta + e^{-i(\eta+\zeta)}\sin\phi \sin\xi \cos\theta = 0.
\end{equation}
If either $\phi=0$ or $\xi=0$, then $\cos\phi\cos\xi\sin\theta = 0$ implying that $\xi=\frac{\pi}{2}$ or $\phi=\frac{\pi}{2}$, respectively. This yields the annihilating states $\ket{\tau} = \ket{01}$ and $\ket{\tau}=\ket{10}$, respectively. If $\phi, \xi \neq 0$ then from the imaginary part of Eq.~\ref{eq:annil_constraint} we find that $e^{-i(\eta+\zeta)} = -1$. Then we can rearrange to give
\begin{equation}
\tan\phi\tan\xi = \tan\theta.
\end{equation}
Using this constraint and the identities
\begin{equation}
\cos\xi = \frac{1}{\sqrt{1+\tan^2 \xi}}; \quad \sin\xi = \frac{\tan\xi}{\sqrt{1+\tan^2 \xi}},
\end{equation}
we can eliminate $\xi$ to yield
\begin{equation}\label{eq:annil_state_phi}
\ket{\tau} = (\cos\phi \ket{0} + e^{i\eta}\sin\phi \ket{1}) \otimes \left(\frac{\tan\phi}{\sqrt{\tan^2 \phi + \tan^2 \theta}}\ket{0} - \frac{e^{-i\eta}\tan\theta}{\sqrt{\tan^2 \phi + \tan^2 \theta}}\ket{1}\right).
\end{equation}
Note that, for $0<\theta<\frac{\pi}{2}$, taking the limits $\phi \rightarrow 0$ and $\phi \rightarrow \frac{\pi}{2}$ we recover the cases $\ket{\tau}=\ket{01}$ and $\ket{\tau}=\ket{10}$, up to irrelevant global phases. Thus we can proceed without loss of generality by assuming that $\rho \otimes \sigma = \ket{\tau}\bra{\tau}$, where $\ket{\tau}$ is given by Eq.~\ref{eq:annil_state_phi}. Averaging over the symmetries of $\ket{\psi}$ outlined above then yields the following expression:
\begin{equation}\label{eq:tr3_av}
\langle \rho \otimes \sigma \rangle = \frac{1}{t^2 \phi + t^2 \theta} \begin{pmatrix}s^2 \phi&0&0&-s^2\phi t\theta\\0&\frac{1}{2}\left(c^2\phi t^2 \theta + s^2 \phi t^2 \phi\right)&0&0\\0&0&\frac{1}{2}\left(c^2\phi t^2 \theta + s^2 \phi t^2 \phi\right)&0\\-s^2\phi t\theta&0&0&s^2\phi t^2\theta\end{pmatrix},
\end{equation}
using the shorthand $s$, $c$, $t$ for $\sin$, $\cos$ and $\tan$, respectively. Given this explicit parameterisation we can extract the eigenvalue $\lambda_1^{(3)}$:
\begin{equation}
\lambda_1^{(3)} = 1-\frac{\sec^2\theta \sin^2\phi}{\tan^2\theta+\tan^2\phi}. 
\end{equation}
It can be shown by simple differentiation w.r.t. $\phi$ that, for fixed $\theta$, this expression has a minimum at $\lambda_1^{(3)}= \lambda_{LB}$. Also, this expression is a continuous function of $\phi$ and therefore can take any value up to its maximum (namely, $1$). Hence a single trace 3 term is enough to achieve any point in the allowable convex hull in Fig.~\ref{fig:convhull}. For convenience we will denote $\tan^2 \phi = P,\; \tan^2 \theta = T$ for $0 \leq P\leq \infty$, $0 < T < \infty$. The explicit form for the whole strategy is then
\begin{equation}
\Omega = \begin{pmatrix}\frac{T+P(P+T+\alpha)}{(1+P)(P+T)}&0&0&\frac{(1-\alpha)P\sqrt{T}}{(1+P)(P+T)}\\0&\frac{(1-\alpha)(T+2P+P^2 + 2PT)}{2(1+P)(P+T)}&0&0\\0&0&\frac{(1-\alpha)(T+2P+P^2 + 2PT)}{2(1+P)(P+T)}&0\\\frac{(1-\alpha)P\sqrt{T}}{(1+P)(P+T)}&0&0&\frac{T+P(1+P+\alpha T)}{(1+P)(P+T)}\end{pmatrix}.
\end{equation}
We now optimise over the two remaining free parameters, $\{\alpha,\phi\}$ (or alternatively, $\{\alpha,P\}$) for fixed $\theta$ (or fixed $T$). This optimisation is rather straightforward from inspection (see Fig.~\ref{fig:alpha_phi_opt}), and the reader may wish to skip to the answer in Eq.~\ref{eq:two_qubit_opt_strat}. However, we include an analytic proof for the sake of completeness. We have shown that it suffices to consider the eigenvalues $\lambda_1$ and $\lambda_2$, given in this case by the expressions
\begin{equation}\label{eq:strat_eigenvalues}
\lambda_1(\alpha,P,T) = 1 - \frac{P(1-\alpha)(1+T)}{(1+P)(P+T)}; \quad \lambda_2(\alpha,P,T) = (1-\alpha)\left[1-\frac{T+P^2}{2(1+P)(P+T)}\right].
\end{equation}
The parameter $q$ is given by the maximum of these two eigenvalues. Note that, if $P=0$, the expression $\lambda_1(\alpha,0,T)=1$ which implies that the adversary can pick a state that the verifier always accepts, and hence the strategy fails. Likewise, taking the limit $\lim_{P\rightarrow \infty} \lambda_1(\alpha,P,T)=1$. Thus we must restrict to the range $0 < P < \infty$ to construct a viable strategy for the verifier. The quantity $q$ is minimised for fixed $T$ when the derivatives with respect to $P$ and $\alpha$ vanish. First, we calculate the derivatives w.r.t. $\alpha$:
\begin{equation}
\frac{\partial \lambda_1}{\partial \alpha} = \frac{P(1+T)}{(1+P)(P+T)}; \quad \frac{\partial \lambda_2}{\partial \alpha} = \frac{-(2P+P^2+T+2PT)}{2(1+P)(P+T)}.
\end{equation}
Given that $P> 0 $ and $T>0$, we have that for any choice of $T$, $\partial_\alpha \lambda_1 > 0 $ and $\partial_\alpha \lambda_2 < 0$. Thus, one of three cases can occur: (a) for a given choice of $T$ and $P$, the lines given by $\lambda_1$ and $\lambda_2$ intersect in the range $0\leq\alpha\leq 1$ and hence there is a valid $\alpha$ such that $q$ is minimised when $\lambda_1 = \lambda_2$; (b) for a given choice of $T$ and $P$, $\lambda_1 > \lambda_2$ in the range $0\leq\alpha\leq1$ and hence $q$ is minimised when $\alpha=0$; (c) for a given choice of $T$ and $P$, $\lambda_1 < \lambda_2$ in the range $0\leq\alpha\leq 1$ and hence $q$ is minimised when $\alpha=1$. However, we note that this final case cannot occur; it suffices to check that $\lambda_1(\alpha=1) > \lambda_2(\alpha=1)$, and from the expressions in~(\ref{eq:strat_eigenvalues}) we have that $\lambda_1(\alpha=1)=1$ and $\lambda_2(\alpha=1)=0$. As a visual aid for the remaining two cases, see Fig.~\ref{fig:alpha_phi_opt}. In case (a),
\begin{equation}
q = \lambda_1 = \lambda_2 = \frac{1}{2}+\frac{1}{2}\left(\frac{T+P^2}{T+P^2 +4P(1+T)}\right).
\end{equation}
In case (b), we have that
\begin{equation}
q = \lambda_1(0,P,T) = \frac{T+P^2}{(1+P)(P+T)}.
\end{equation}
We must also minimise w.r.t. $\phi$; however, we can safely minimise w.r.t. $P$ as $\partial_\phi P > 0$ (unless $\phi=0$, but in this case $q=1$ and the strategy fails). In case (b), we have
\begin{equation}
\frac{\partial q}{\partial P} = \frac{(P^2 - T)(1+T)}{(1+P)^2(P+T)^2}.
\end{equation}
In this case, consider the two points implicitly defined by the constraint $\lambda_1(0,P,T)=\lambda_2(0,P,T)$ (drawn as the black points in Fig.~\ref{fig:alpha_phi_opt}). Denote these points $f^\pm(T)$. It can be readily checked that in case (b), $\partial_P q < 0$ for any $q < f^-(T)$, and $\partial_P q > 0$ for any $q > f^+(T)$. Thus the minimum w.r.t $P$ must occur when $\lambda_1(0,P,T)=\lambda_2(0,P,T)$ and hence we can restrict our attention to case (a) (note Fig.~\ref{fig:alpha_phi_opt}). In this case, $\partial_P q$ becomes
\begin{equation}
\frac{\partial q}{\partial P} = \frac{-2(1+T)(T-P^2)}{[T+4PT+P(4+P)]^2} = 0,
\end{equation}
which implies that $P = \sqrt{T}$. Substituting in the optimal choices for the parameters $\{\alpha,P\}$ and reexpressing solely in terms of $\theta$ gives the optimal strategy
\begin{equation}\label{eq:two_qubit_opt_strat}
\Omega^{opt} = \frac{2-\sin(2\theta)}{4+\sin(2\theta)}P^+_{ZZ} + \frac{2(1+\sin(2\theta))}{4+\sin(2\theta)}\Omega_3^{opt},
\end{equation}
where $\Omega_3^{opt}$ is given by
\begin{equation}
\Omega_3^{opt} = \Id - \frac{1}{(1+t)^2}\begin{pmatrix}1&0&0&-t\\0&t&0&0\\0&0&t&0\\-t&0&0&t^2\end{pmatrix}, \quad t=\tan\theta.
\end{equation}
This strategy accepts an orthogonal state with probability
\begin{equation}
q_{opt} = \frac{2+\sin(2\theta)}{4+\sin(2\theta)},
\end{equation}
implying that the number of measurements needed to verify to within accuracy $\epsilon$ and with statistical power $1-\delta$ under this test is
\begin{equation}
n_{opt} = \frac{\ln \delta^{-1}}{\ln((1- \Delta_{\epsilon})^{-1})} = \frac{\ln \delta^{-1}}{\ln((1- \epsilon(1-q^{opt}))^{-1})} \approx (2+\sin\theta \cos\theta)\epsilon^{-1}\ln\delta^{-1}.
\end{equation}
The final step is to show that the operator $\Omega_3^{opt}$ can be decomposed into a small set of locally implementable, projective measurements. We can do so with a strategy involving only three terms:
\begin{equation}
\Omega_3^{opt} = \frac{1}{3}\left[\sum_{k=1}^3 (\Id - \ket{\phi_k}\bra{\phi_k})\right],
\end{equation}
where the set of separable states $\{\ket{\phi_k}\}$ are the following:
\begin{align}
\ket{\phi_1} &= \left(\frac{1}{\sqrt{1+\tan\theta}}\ket{0} + \frac{e^{\frac{2\pi i}{3}}}{\sqrt{1+\cot\theta}}\ket{1} \right) \otimes \left(\frac{1}{\sqrt{1+\tan\theta}}\ket{0} + \frac{e^{\frac{\pi i}{3}}}{\sqrt{1+\cot\theta}}\ket{1} \right),\\
\ket{\phi_2} &= \left(\frac{1}{\sqrt{1+\tan\theta}}\ket{0} + \frac{e^{\frac{4\pi i}{3}}}{\sqrt{1+\cot\theta}}\ket{1} \right) \otimes \left(\frac{1}{\sqrt{1+\tan\theta}}\ket{0} + \frac{e^{\frac{5\pi i}{3}}}{\sqrt{1+\cot\theta}}\ket{1} \right),\\
\ket{\phi_3} &= \left(\frac{1}{\sqrt{1+\tan\theta}}\ket{0} + \frac{1}{\sqrt{1+\cot\theta}}\ket{1} \right) \otimes \left(\frac{1}{\sqrt{1+\tan\theta}}\ket{0} - \frac{1}{\sqrt{1+\cot\theta}}\ket{1} \right),
\end{align}
which gives a strategy of the required form.
\end{proof}

\begin{figure*}
\begin{minipage}[t]{0.48\textwidth}
\includegraphics[width=\textwidth]{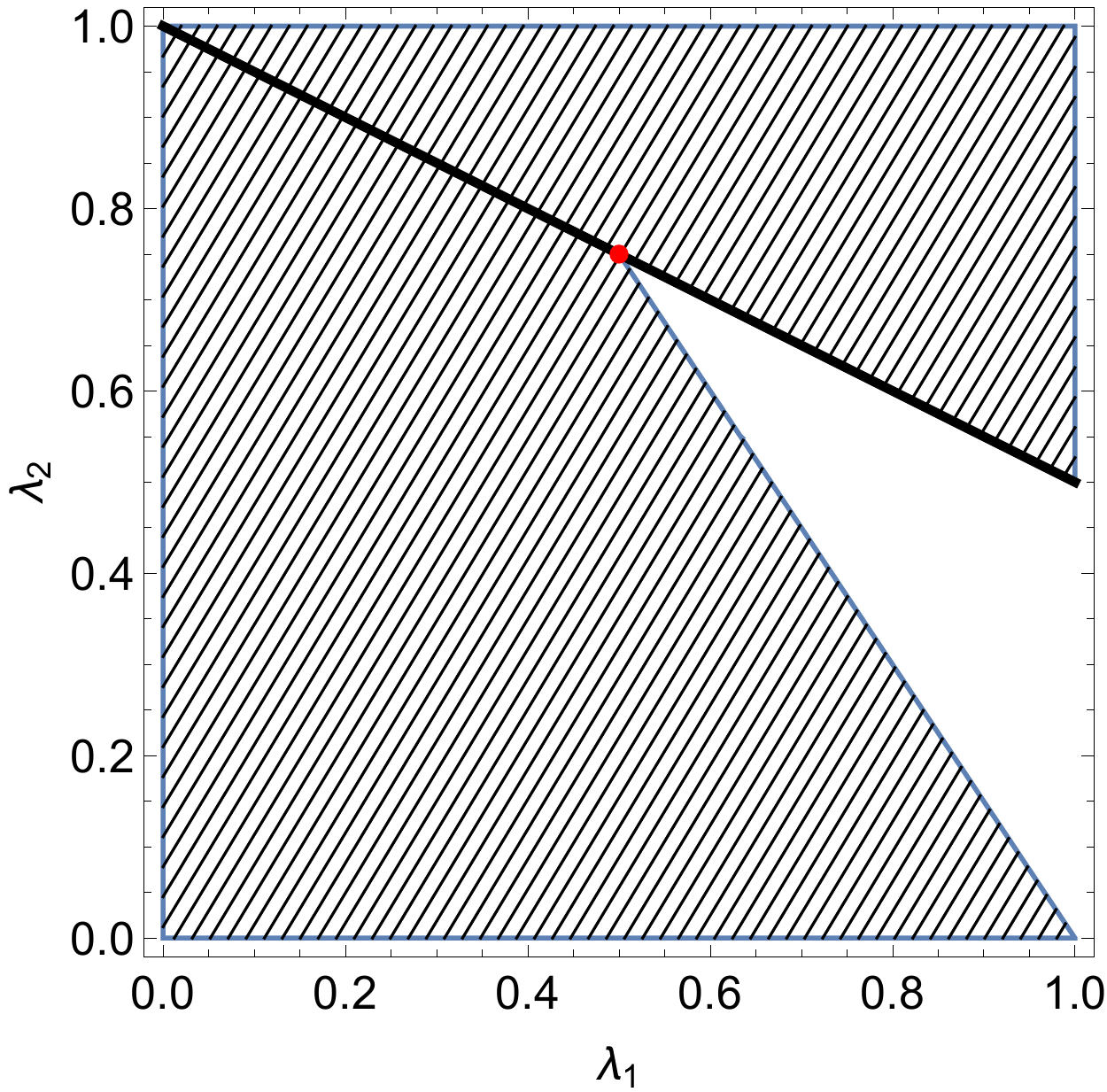}
\caption{Shaded region: unreachable parameters given a strategy $\Omega$ that is both local and of the form $\Omega = \alpha P^+_{ZZ} + (1-\alpha)\Omega_3$, where $\Omega_3$ is the trace 3 part. Here, $\theta = \frac{\pi}{8}$.}
\label{fig:convhull}
\end{minipage}\hfill%
\begin{minipage}[t]{0.48\linewidth}
\includegraphics[width=\linewidth]{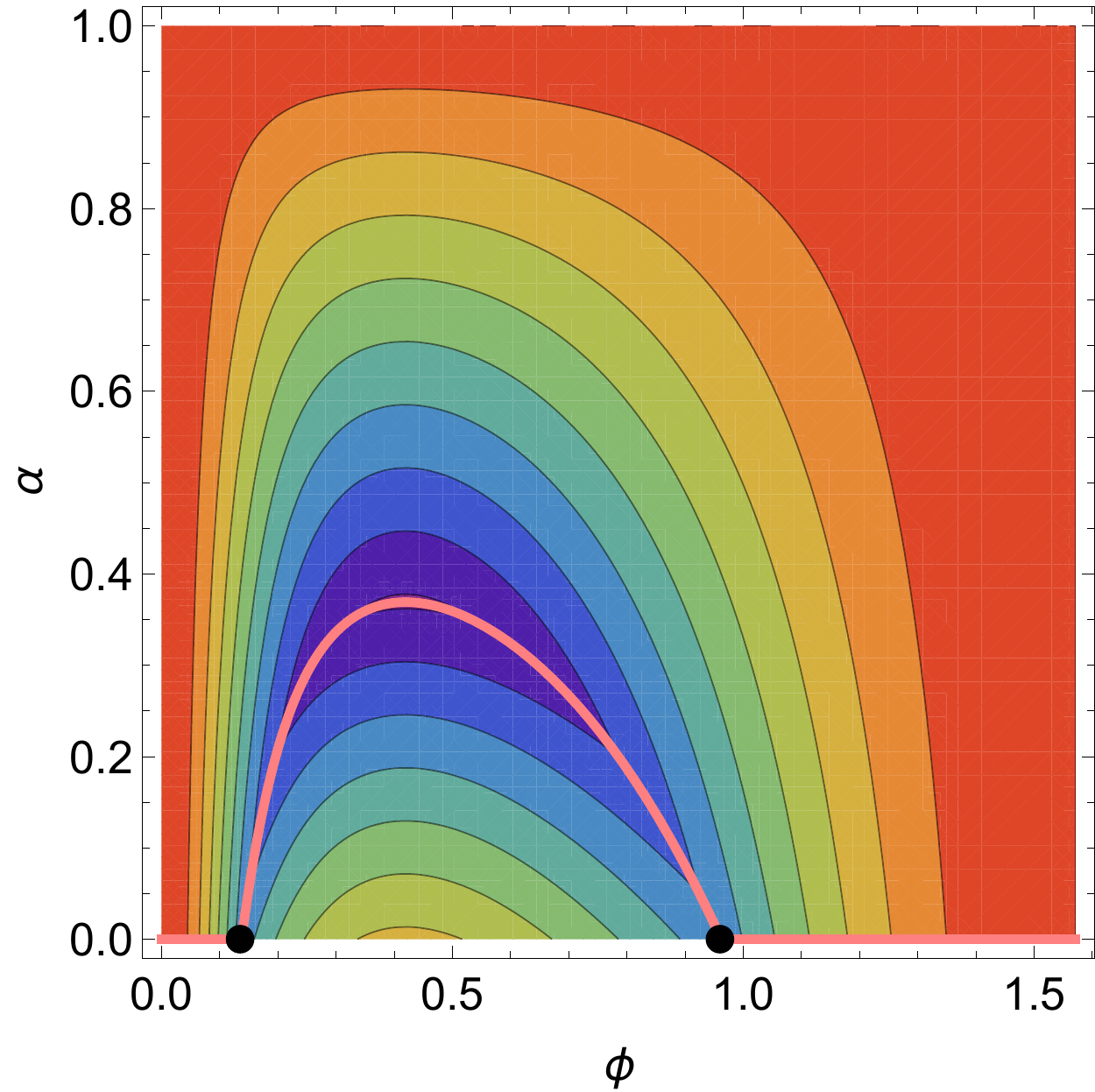}
\caption{A contour map of the function $q(\alpha,\phi)=\max\{\lambda_1(\alpha,\phi),\lambda_2(\alpha,\phi)\}$ for $\theta = \frac{\pi}{8}$, where the pair $(\lambda_1,\lambda_2)$ are given in~\ref{eq:strat_eigenvalues}. The pink curve denotes the minimum w.r.t $\alpha$ given fixed $\phi$. Above the curve, $\lambda_1 > \lambda_2$; below, $\lambda_1 < \lambda_2$.}
\label{fig:alpha_phi_opt}
\end{minipage}%
\end{figure*}

\noindent We now briefly treat the special cases that were omitted from the above proof: $\theta = 0, \frac{\pi}{4}, \frac{\pi}{2}$.

\noindent $\mathit{\theta = 0, \theta = \frac{\pi}{2}}$: In these cases, the state $\ket{\psi} = \ket{00}$ or $\ket{\psi}=\ket{11}$. Then the globally optimal strategy, just projecting onto $\ket{\psi}$, is an allowed local measurement. Thus in these cases the optimal strategy is to just apply the projector $\ket{00}\bra{00}$ or $\ket{11}\bra{11}$. Given this strategy we have that $p=1$ and $q=0$, giving a scaling of the number of measurements required as
\begin{equation}
n_{opt} \approx \epsilon^{-1}\ln\delta^{-1}.
\end{equation}
$\mathit{\theta = \frac{\pi}{4}}$: This case is treated explicitly in the main body. The optimal strategy is to perform the Pauli measurements $XX$, $-YY$ and $ZZ$ with equal weight; i.e.
\begin{equation}
\Omega = \frac{1}{3}(P^+_{XX}+P^+_{-YY}+P^+_{ZZ}),
\end{equation}
where $P^+_{M}$ is the projector onto the positive eigensubspace of the operator $M$. In this case, the number of measurements required is
\begin{equation}
n_{opt} \approx \frac{3}{2}\epsilon^{-1}\ln\delta^{-1}.
\end{equation}


\section{Stabilizer states}\label{app:stabilizers}

We now discuss verification strategies for stabilizer states. We take $\ket{\psi}$ to be a stabilizer state of $N$ qubits, namely that there exists a generating set of $N$ commuting Pauli operators $M_1,\dots,M_N$ on $N$ qubits such that $M_i \ket{\psi} = \ket{\psi}$ for all $i$. Stabilizer states are ubiquitous in various areas of quantum information, for example in quantum error correction and measurement-based quantum computing; for an introduction to the stabilizer formalism, see~\cite{Gottesman1997,Gottesman1996} and~\cite{Nielsen2010} Sec 10.5. We will describe below a strategy constructed from only stabilizer measurements that accepts $\ket{\psi}$ with certainty, and hence achieves the same asymptotic scaling in the number of required measurements with respect to $\epsilon$ as the two-qubit case above. However, we do not rule out that there may be non-stabilizer strategies that give a small constant factor improvement over the strategy defined here.

\begin{thm}
Write a stabilizer state $\ket{\psi}$ and strategy $\Omega = \sum_{j=1}^K \mu_j P_j$, where the set $\{P_j\}$ are the projectors onto the positive eigenspace of $K$ linearly independent stabilizers of $\ket{\psi}$, for $K \leq 2^N - 1$. Then the optimal choice of the parameter $K$ and weights $\mu_j$ are $K = 2^N - 1; \; \mu_j = \frac{1}{2^N - 1}$ for all $j$. The number of measurements needed to verify to within fidelity $\epsilon$ and statistical power $1-\delta$ is then
\begin{equation}
n_{opt}^{stab} \approx \frac{2^N - 1}{2^{(N-1)}} \epsilon^{-1}\ln\frac{1}{\delta}.
\end{equation}
\end{thm}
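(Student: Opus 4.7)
The plan is to directly compute $q = \max_{\ket{\psi^\perp}}\bra{\psi^\perp}\Omega\ket{\psi^\perp}$ for the proposed uniform all-stabilizer strategy, then establish optimality via a trace bound that holds for \emph{every} convex combination of stabilizer projectors, and finally substitute into the sample-complexity formula from Appendix~\ref{app:ver_strat_opt}.

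First I would invoke the group identity $\sum_{S\in G} S = 2^N \ket{\psi}\bra{\psi}$, where $G$ is the $2^N$-element stabilizer group of $\ket{\psi}$ (which follows because $\frac{1}{2^N}\sum_{S\in G}S$ is the projector onto the unique common $+1$ eigenspace of $G$). Writing $P_j = (I + S_j)/2$ and plugging in uniform weights over the $2^N - 1$ non-identity stabilizers yields
\begin{equation}
\Omega = \frac{1}{2}I + \frac{1}{2(2^N - 1)}\bigl(2^N \ket{\psi}\bra{\psi} - I\bigr) = \frac{2^{N-1}-1}{2^N - 1}\,I + \frac{2^{N-1}}{2^N - 1}\ket{\psi}\bra{\psi},
\end{equation}
which has eigenvalue $1$ on $\ket{\psi}$ and eigenvalue $(2^{N-1}-1)/(2^N-1)$ on every vector of $\ket{\psi}^\perp$. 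Hence $q = (2^{N-1}-1)/(2^N-1)$ for this strategy.

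Next I would establish optimality. Every non-identity element of $G$ is a signed Pauli, so its $+1$ eigenspace has dimension $2^{N-1}$, giving $\tr(P_j) = 2^{N-1}$ uniformly; consequently $\tr(\Omega) = 2^{N-1}$ for any choice of $K$ and weights. Combined with $\Omega\ket{\psi} = \ket{\psi}$, the residual eigenvalues on the $(2^N - 1)$-dimensional complement $\ket{\psi}^\perp$ must sum to $2^{N-1}-1$, so their maximum is at least the average $(2^{N-1}-1)/(2^N - 1)$. The uniform all-stabilizer choice saturates this lower bound, proving optimality; the explicit identity above further shows that equality forces $\Omega$ to act as a scalar on $\ket{\psi}^\perp$, which pins down all $2^N-1$ stabilizers with equal weight, so no intermediate $K$ can match this performance. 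For the subsidiary cases referred to in the main text, $K < N$ leaves a common $+1$ eigenspace of dimension $\geq 2$ containing an orthogonal state accepted with certainty (so $q = 1$ and the strategy fails), while for $K = N$ independent generators a direct calculation in the joint eigenbasis gives optimum $q = 1 - 1/N$ attained by equal weighting of the generators.

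Finally, plugging $1 - q = 2^{N-1}/(2^N - 1)$ and $\Delta_\epsilon = \epsilon(1 - q)$ into $n \ge \ln\delta^{-1}/\ln((1-\Delta_\epsilon)^{-1}) \approx \Delta_\epsilon^{-1}\ln\delta^{-1}$ delivers the stated bound. The main obstacle I anticipate is pedagogical rather than technical: one must cleanly separate the trace inequality (which lower-bounds $q$ for \emph{any} stabilizer strategy) from the stabilizer-sum identity (which singles out the uniform all-stabilizer choice as the unique attainer), so that optimality is not conflated with mere attainability.
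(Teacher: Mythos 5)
Your proposal is correct and follows essentially the same route as the paper's proof: the same trace constraint $\tr(\Omega)=2^{N-1}$ (from $\tr(P_j)=2^{N-1}$ for every non-identity stabilizer projector) combined with $\Omega\ket{\psi}=\ket{\psi}$ to lower-bound the largest eigenvalue on $\ket{\psi}^\perp$ by its average $(2^{N-1}-1)/(2^N-1)$, and the same group-sum identity $\sum_{S\in G}S = 2^N\ket{\psi}\bra{\psi}$ with $P_j=(\Id+S_j)/2$ to show the uniform all-stabilizer strategy attains it. The differences are only organizational: you exhibit the attainer first and phrase the pigeonhole step as ``the maximum is at least the average,'' whereas the paper first argues the optimal $\Omega$ must equal $\ket{\psi}\bra{\psi}+a\Id^\bot$ and then verifies achievability via the same identity.
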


\begin{proof}
Recall that as the verifier accepts $\ket{\psi}$ with certainty, we are concerned with the optimisation of $\Delta_\epsilon$, which can be written as
\begin{align}
\Delta_\epsilon &= \max_\Omega \min_{\ket{\psi^\bot}} \epsilon(1-\bra{\psi^\bot}\Omega\ket{\psi^\bot})\\
&= \epsilon(1- \min_\Omega \max_{\ket{\psi^\bot}} \bra{\psi^\bot}\Omega\ket{\psi^\bot}),
\end{align}
where the maximisation is over positive matrices $\Omega$ such that $\Omega\ket{\psi} = \ket{\psi}$.

Now consider $\Omega$ written as a matrix in the basis $\{\ket{\psi},\ket{\psi_j^\bot}\}$, $j = 1 \ldots (2^N - 1)$ where the states $\ket{\psi_j^\bot}$ are mutually orthogonal and all orthogonal to $\ket{\psi}$. Given that $\Omega\ket{\psi} = \ket{\psi}$, we know that $\bra{\psi_j^\bot}\Omega\ket{\psi}=0 \; \forall j$. Then in this basis $\Omega$ can be written 
\begin{equation}
\Omega = \begin{pmatrix}1&\mathbf{0}^\top\\\mathbf{0}&\mathbf{M}\end{pmatrix},
\end{equation}
where $\mathbf{0}$ is the $(2^N - 1)$-dimensional zero vector and $\mathbf{M}$ is a $(2^N-1)\times(2^N-1)$ Hermitian matrix. Then $\Omega$ must be writable as $\Omega = \ket{\psi}\bra{\psi} + \sum_{j=1}^{2^N - 1} \nu_j \ket{\phi_j}\bra{\phi_j}$, where $\sum_j \nu_j \ket{\phi_j}\bra{\phi_j}$ is the spectral decomposition of $\mathbf{M}$. Given this decomposition, the optimisation for the adversary is straightforward -- pick $\ket{\psi^\bot}$ to be the eigenstate in the decomposition of $\mathbf{M}$ with largest eigenvalue: $\ket{\psi^\bot} = \ket{\phi_{max}}$ where $\nu_{max} = \max_j \nu_j$. Then
\begin{equation}
\Delta_\epsilon = \epsilon(1-\min_\Omega \bra{\phi_{max}}\Omega\ket{\phi_{max}}) = \epsilon(1-\min_\Omega \nu_{max}).
\end{equation}
Given this choice by the adversary, the verifier is then forced to set the strategy such that all the eigenvalues of $\mathbf{M}$ are equal; i.e. that $\mathbf{M} = a\Id$ for some constant $a$. To see this, consider an alternative strategy where the eigenvalues $\nu_j$ are not equal. Now, consider rewriting $\Omega$ in terms of stabilizers of $\ket{\psi}$. For any stabilizer (i.e.\ tensor product of Paulis, perhaps with an overall phase) $M$ over $N$ qubits, the projector onto the positive eigensubspace has $\tr(P_M^+) = 2^{N-1}$. Given that $\Omega$ is built from a convex combination of these projectors, and recalling from Lemma \ref{lem:no_id_part} that $\Omega$ does not contain an identity term, we also know that $\tr(\Omega) = 2^{N-1}$. However, we have also expanded $\Omega$ as $\Omega = \ket{\psi}\bra{\psi} + \sum_j \nu_j \ket{\phi_j}\bra{\phi_j}$, and so
\begin{equation}
\tr(\Omega) = 1 + \sum_j \nu_j = 2^{N-1}.
\end{equation}
Then, it is straightforward to see that decreasing any eigenvalue below $a$ must result in an increase in at least one other eigenvalue in order to maintain this equality, and hence would increase the value of $\nu_{max}$. Thus the optimal choice for the verifier is to set $\Omega = \ket{\psi}\bra{\psi} + a\Id^\bot$, where $\Id^\bot$ is the identity matrix on the subspace orthogonal to $\ket{\psi}$. Taking the trace of this expression gives
\begin{equation}
    \tr [\ket{\psi}\bra{\psi} + a\Id^\bot] = 1 + (2^N - 1)a = 2^{N-1}.
\end{equation}
This can be rearranged for $a$ and then substituted into the expression for $\Delta_\epsilon$, which gives
\begin{equation}
\Delta_\epsilon = \frac{2^{(N-1)}}{2^N - 1}\epsilon,
\end{equation}
or that the number of stabilizer measurements required to verify $\ket{\psi}$ is bounded below by
\begin{equation}\label{eq:stab_n}
n_{opt}^{stab} \approx \frac{2^N - 1}{2^{(N-1)}} \epsilon^{-1}\ln\delta^{-1}.
\end{equation}
The optimal $\Omega = \ket{\psi}\bra{\psi} + \frac{2^{(N-1)}-1}{2^N - 1}\Id^\bot$ and the optimal scaling above can be achieved by decomposing $\Omega$ into a strategy involving a maximal set (excluding the identity) of $2^N - 1$ linearly independent stabilizers, all with equal weight. To see this note that for a stabilizer group of a state $\ket{\psi}$ of $N$ qubits, there are $2^N$ linearly independent stabilizers (including the identity element). Denote these stabilizers $\{M_i, i=1 \ldots 2^N\}$. Then, we make use of the fact that \cite{Hein2005}

\begin{equation}
\frac{1}{2^N} \sum_{i=1}^{2^N} M_i = \ket{\psi}\bra{\psi}.
\end{equation}
Explicitly extracting the identity element gives
\begin{equation}
\sum_{i=1}^{2^N-1} M_i = 2^N \ket{\psi}\bra{\psi} - \Id.
\end{equation}
Now, each stabilizer (for any $N$) is a two outcome measurement and so we can make use of the fact that $M_i$ can be written in terms of the projector onto the positive eigenspace of $M_i$, denoted $P^+_i$, as $M_i = 2P^+_i - \Id$. Substituting in this expression and rearranging gives
\begin{equation}
\sum_{i=1}^{2^N - 1} P^+_i = 2^{(N-1)}\ket{\psi}\bra{\psi} + (2^{(N-1)}-1)\Id.
\end{equation}
Then normalising this expression over $2^N - 1$ stabilizers yields
\begin{align}
\nonumber \frac{1}{2^N - 1}\sum_{i=1}^{2^N - 1}P^+_i &= \frac{2^{(N-1)}}{2^N - 1}\ket{\psi}\bra{\psi} + \frac{2^{(N-1)}-1}{2^N - 1}\Id\\
\nonumber &= \frac{2^{(N-1)}+2^{(N-1)}-1}{2^N - 1}\ket{\psi}\bra{\psi} + \frac{2^{(N-1)}-1}{2^N - 1}\Id^\bot\\
&= \ket{\psi}\bra{\psi} + \frac{2^{(N-1)}-1}{2^N - 1}\Id^\bot = \Omega,
\end{align}
where $\Id^\bot$ is the identity matrix on the subspace orthogonal to $\ket{\psi}$, as required.
\end{proof}

Note that for growing $N$, the quantity $n_{opt}^{stab}$ given in Eq.~\ref{eq:stab_n} is bounded above by $2\epsilon^{-1}\ln \delta^{-1}$, which does not depend on $N$, and implies that this stabilizer strategy requires at most a factor of two more measurements than the optimal non-local verification strategy (just projecting onto $\ket{\psi}$).

One could also consider a reduced strategy that involves measuring fewer stabilizers. However, given a state of $N$ qubits and a set of $k$ stabilizers, the dimension of the subspace stabilized by this set is at least $2^{N-k}$. Thus for any choice of $k<N$, there must always exist at least one state $\ket{\psi^\bot}$ orthogonal to $\ket{\psi}$ that is stabilized by every stabilizer in the set. Then, the adversary can construct a $\sigma$ that always accepts, implying that the verifier has no discriminatory power between $\ket{\psi}$ and $\sigma$ and thus the strategy fails. Consider instead constructing a strategy from the $N$ stabilizer generators of $\ket{\psi}$, with corresponding projectors $\{P^{s.g.}_j\}$. Then, $\Omega = \sum_j \mu_j P^{s.g.}_j$. The set of projectors $\{P^{s.g.}_j\}$ commute and so share a common eigenbasis, denoted $\{\ket{\lambda_j}\}$. To optimise this strategy over the weights $\mu_j$, we first need the following lemma:
\begin{lem}\label{lem:sg}
Write the unique sets of $N-1$ independent stabilizer generators of $\ket{\psi}$, $S_k = \{M_j, j=1\ldots N\} \setminus M_k$, $k = 1 \ldots N$. Then each $S_k$ corresponds to a state $\ket{\lambda_k}$, $\braket{\lambda_k|\psi}=0$, such that $\braket{\lambda_k|\lambda_l}=\delta_{kl}$.
\end{lem}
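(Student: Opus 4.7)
The plan is to identify each $\ket{\lambda_k}$ concretely as a simultaneous eigenvector of all the generators $M_1,\dots,M_N$, and then exploit the distinct eigenvalue patterns to establish mutual orthogonality via a standard Hermitian-operator trick.

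First I would observe that the subgroup generated by $S_k$ has $N-1$ commuting independent generators on $N$ qubits, so the subspace $V_k$ it stabilizes has dimension $2^{N-(N-1)}=2$. Clearly $\ket{\psi}\in V_k$ since $\ket{\psi}$ is stabilized by every $M_j$. The missing generator $M_k$ commutes with all of $S_k$, so $M_k$ preserves $V_k$, and since $M_k^2=\Id$ it has eigenvalues $\pm 1$ on $V_k$. The $+1$-eigenspace on $V_k$ is spanned by $\ket{\psi}$ (as $M_k\ket{\psi}=\ket{\psi}$), so $V_k$ splits as $\mathrm{span}\{\ket{\psi}\}\oplus\mathrm{span}\{\ket{\lambda_k}\}$, where $\ket{\lambda_k}$ is uniquely determined (up to phase) by
\begin{equation}
M_j\ket{\lambda_k}=\ket{\lambda_k}\ \text{for}\ j\neq k,\qquad M_k\ket{\lambda_k}=-\ket{\lambda_k},\qquad \braket{\lambda_k|\psi}=0.
\end{equation}

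Next, I would prove $\braket{\lambda_k|\lambda_l}=0$ for $k\neq l$ by evaluating $\braket{\lambda_k|M_k|\lambda_l}$ in two ways. Applying $M_k$ to the right and using $M_k\ket{\lambda_l}=\ket{\lambda_l}$ (since $k\neq l$, so $M_k\in S_l$) gives $\braket{\lambda_k|M_k|\lambda_l}=\braket{\lambda_k|\lambda_l}$. Applying $M_k$ to the left via its Hermiticity and $M_k\ket{\lambda_k}=-\ket{\lambda_k}$ gives $\braket{\lambda_k|M_k|\lambda_l}=-\braket{\lambda_k|\lambda_l}$. Comparing yields $\braket{\lambda_k|\lambda_l}=0$. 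Together with normalization $\braket{\lambda_k|\lambda_k}=1$ and $\braket{\lambda_k|\psi}=0$ from the construction, this gives the claim.

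There isn't really a hard step here; the only thing to be careful about is checking that $\ket{\lambda_k}$ is genuinely well-defined (which relies on the independence of the generators, so that removing $M_k$ really doubles the stabilized subspace rather than leaving it invariant), and that the Hermiticity of each stabilizer generator is legitimately used when moving $M_k$ from one side of the inner product to the other. Both are immediate from the standard stabilizer formalism.
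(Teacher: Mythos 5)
Your proof is correct and follows essentially the same route as the paper: existence of $\ket{\lambda_k}$ from the dimension count $2^{N-(N-1)}=2$ for the subspace stabilized by $S_k$, and orthogonality from the eigenvalue structure of the commuting generators. The only cosmetic difference is that you establish $\braket{\lambda_k|\lambda_l}=0$ directly, by evaluating $\braket{\lambda_k|M_k|\lambda_l}$ with $M_k\ket{\lambda_k}=-\ket{\lambda_k}$ and $M_k\ket{\lambda_l}=+\ket{\lambda_l}$, whereas the paper argues by contradiction that $\ket{\lambda_k}$ and $\ket{\lambda_l}$ are distinct members of a common orthogonal eigenbasis (if proportional, the full generator set would stabilize them, contradicting unique stabilization of $\ket{\psi}$); your version arguably makes the orthogonality step slightly more explicit.
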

\begin{proof}
Each set $S_k$ stabilizes a space of dimension two, and so a $\ket{\lambda_k}$ where $\braket{\lambda_k|\psi}=0$ exists. Moreover, the stabilizer generators define an orthogonal eigenbasis of which $\ket{\lambda_k}$ is an element. To show that two sets $S_k$ and $S_l$, $k \neq l$, define distinct eigenvectors, assume the converse; that $\ket{\lambda_k} \propto \ket{\lambda_l}$. However, then the set $S = S_k \cup S_l$ would stabilize $\ket{\lambda_k}$, which is a contradiction as $S$ is the full set of stabilizer generators and uniquely stabilizes $\ket{\psi}$.
\end{proof}
\noindent We can now derive the optimal stabilizer generator strategy.
\begin{thm}
For a stabilizer state $\ket{\psi}$ and strategy $\Omega = \sum_{j=1}^N \mu_j P^{s.g.}_j$, where the set $\{P^{s.g.}_j\}$ are the projectors onto the positive eigenspace of the stabilizer generators of $\ket{\psi}$, the optimal choice of the weights $\mu_j$ is $\mu_j = \frac{1}{N}$, for all $j$. The number of measurements needed to verify to within fidelity $\epsilon$ and statistical power $1-\delta$ is then
\begin{equation}
n^{s.g.}_{opt} \approx \frac{N}{\epsilon}\ln\frac{1}{\delta}.
\end{equation}
\end{thm}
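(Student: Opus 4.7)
The plan is to reduce the optimization to an elementary problem about probability vectors by exploiting the fact that the stabilizer generator projectors commute, and hence can be simultaneously diagonalized. Since each $P^{s.g.}_j$ is a Pauli-type projector and the generators $\{M_j\}$ pairwise commute, the whole family $\{P^{s.g.}_j\}$ shares a common eigenbasis. This basis is the usual stabilizer basis: the $2^N$ joint eigenstates $\ket{s}$ labeled by sign patterns $s = (s_1,\ldots,s_N) \in \{+,-\}^N$, where $P^{s.g.}_j \ket{s} = \ket{s}$ if $s_j = +$ and $P^{s.g.}_j \ket{s} = 0$ if $s_j = -$. The all-plus state is exactly $\ket{\psi}$, and the $N$ single-flip states of Lemma~\ref{lem:sg} sit inside this basis.

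Because $\Omega = \sum_j \mu_j P^{s.g.}_j$ is diagonal in this basis, its eigenvalues are
\begin{equation}
\Omega \ket{s} = \Big(\sum_{j:\,s_j = +}\mu_j\Big)\ket{s},
\end{equation}
so $\ket{\psi}$ has eigenvalue $1$ and every orthogonal joint eigenstate has eigenvalue $1 - \sum_{j \in S}\mu_j$, where $S \neq \emptyset$ is the set of flipped signs. Since $\Omega$ is diagonal on the orthogonal subspace, $\max_{\ket{\psi^\bot}} \bra{\psi^\bot}\Omega\ket{\psi^\bot}$ equals the largest such eigenvalue, and the adversary maximizes this by flipping a single sign corresponding to the smallest weight, i.e.\ $|S|=1$, giving
\begin{equation}
\max_{\ket{\psi^\bot}} \bra{\psi^\bot}\Omega\ket{\psi^\bot} = 1 - \min_{k} \mu_k.
\end{equation}

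The verifier then solves $\max_{\mu} \min_k \mu_k$ subject to $\mu_k \geq 0$ and $\sum_k \mu_k = 1$. This is a standard minimax that is optimized by the uniform distribution $\mu_k = 1/N$, yielding $\min_k \mu_k = 1/N$ and hence $\Delta_\epsilon = \epsilon/N$. Substituting into the general sample-complexity bound from Appendix~\ref{app:ver_strat_opt},
\begin{equation}
n^{s.g.}_{opt} = \frac{\ln \delta^{-1}}{\ln\big((1-\Delta_\epsilon)^{-1}\big)} \approx \frac{N}{\epsilon}\ln\frac{1}{\delta}.
\end{equation}

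There is no truly hard step here; the only mildly subtle point is justifying that the supremum of $\bra{\psi^\bot}\Omega\ket{\psi^\bot}$ over all normalized $\ket{\psi^\bot} \perp \ket{\psi}$ is attained at a single-flip stabilizer eigenstate, rather than at some superposition of joint eigenstates. This follows immediately from the spectral theorem applied to $\Omega$ restricted to the $(2^N-1)$-dimensional orthogonal complement of $\ket{\psi}$, together with the observation that the subset-sum $1 - \sum_{j\in S}\mu_j$ is maximized over nonempty $S$ by taking $|S|=1$ at the argmin of $\mu_k$.
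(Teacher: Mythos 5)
Your proof is correct and follows essentially the same route as the paper's: both diagonalize $\Omega$ in the common eigenbasis of the commuting generator projectors, observe that the worst-case orthogonal state is a single-sign-flip joint eigenstate (so $q = 1-\min_k \mu_k$), and solve the resulting minimax with the uniform weights $\mu_k = \tfrac{1}{N}$, giving $\Delta_\epsilon = \epsilon/N$ and $n^{s.g.}_{opt} \approx N\epsilon^{-1}\ln\delta^{-1}$. Your sign-pattern labelling of the joint eigenbasis simply makes explicit what the paper encodes via the parity-check matrix $\epsilon_{jk}$ and Lemma~\ref{lem:sg}.
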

\begin{proof}
If we write a state orthogonal to $\ket{\psi}$ in the stabilizer eigenbasis as $\ket{\psi^\bot} = \sum_k \alpha_k \ket{\lambda_k}$, we have that
\begin{align}
\nonumber \bra{\psi^\bot}\Omega\ket{\psi^\bot} &= \sum_{k,m = 1}^{2^N}\sum_{j=1}^{N} \bar{\alpha}_k \alpha_m \mu_j \bra{\lambda_k} P^{s.g.}_j\ket{\lambda_m}\\
\nonumber &= \sum_{k,m = 1}^{2^N}\sum_{j=1}^{N} \bar{\alpha}_k \alpha_m \mu_j \delta_{km}\epsilon_{jk}\\
&= \sum_{k = 1}^{2^N}\sum_{j=1}^{N} \vert \alpha_k \vert^2 \mu_j \epsilon_{jk} \coloneqq \sum_{k=1}^{2^N} \vert \alpha_k \vert^2 E_k,
\end{align}
where $\epsilon_{jk} = 1$ if $P_j\ket{\lambda_k}=\ket{\lambda_k}$ and zero otherwise. This quantity is the \emph{parity-check matrix} for the set of stabilizers $\{P_j^{s.g.}\}$. The quantity of interest with respect to verification is
\begin{equation}
q = \min_{\Omega}\max_{\ket{\psi^\bot}} \bra{\psi^\bot}\Omega\ket{\psi^\bot} = \min_{\mu_j}\max_{\alpha_k} \sum_{j,k} \vert \alpha_k \vert^2 \mu_j \epsilon_{jk},
\end{equation}
where the verifier's minimisation is over the probabilities $\mu_j$ with which a stabilizer generator indexed by $j$ is drawn in the protocol, and the adversary maximises over the set of amplitudes $\alpha_k$ that describes the state most likely to fool the verifier. Lemma~\ref{lem:sg} gives that, from the full set of $2^N$ basis states $\ket{\lambda_k}$, there is a subset of $N$ basis states $\ket{\lambda_{\tilde{k}}}$, $\tilde{k} \in I$ for $\vert I \vert = N$, stabilized by exactly $N-1$ generators; thus for basis states in this subset, the quantity $\epsilon_{j\tilde{k}}=1-\delta_{j\tilde{k}}$. Then we can compute the summation over $j$ as
\begin{equation}
E_{\tilde{k}} = \sum_j \mu_j \epsilon_{j\tilde{k}}= \sum_j \mu_j (1-\delta_{j\tilde{k}}) = 1 - \mu_{\tilde{k}},
\end{equation}
using the fact that $\sum_j \mu_j = 1$. Now, each element of $E_k$ for $k \notin I$ is a summation of at most $N-2$ terms, $\mu_j$. Thus there always exists another element $E_{\tilde{k}}$ for $\tilde{k} \in I$ that is at least as large; and so it is never detrimental to the adversary to shift any amplitude on the basis state labelled by $k$ to the basis state labelled by $\tilde{k}$. Thus the optimal choice for the adversary's state is $\ket{\psi^\bot} \in \text{span}\{\ket{\lambda_{\tilde{k}}}: \tilde{k} \in I\}$. Given this choice by the adversary, we have that
\begin{equation}
q = \min_{\mu_{\tilde{k}}}\max_{\alpha_{\tilde{k}}} \sum_{\tilde{k}} \vert \alpha_{\tilde{k}}\vert^2 (1-\mu_{\tilde{k}}) = \min_{\mu_{\tilde{k}}}\max_{\tilde{k}}(1-\mu_{\tilde{k}}).
\end{equation}
It is straightforward to see that the optimal choice for the verifier is to have $\mu_{\tilde{k}} = \frac{1}{N}$, for all $\tilde{k}$; then $\Omega = \frac{1}{N}\sum{P_j^{s.g.}}$. Thus
\begin{equation}
q = 1 - \frac{1}{N} \Rightarrow n^{s.g.}_{opt} \approx \frac{N}{\epsilon}\ln\frac{1}{\delta}.
\end{equation}
\end{proof}

Clearly, this scaling is much poorer in $N$ than in the case where the full set of $2^N - 1$ linearly independent stabilizers are allowed; indicating a trade-off between the total number of required measurements and the accessible number of measurement settings, in this case.

\section{Concentration inequalities and the relative entropy}\label{app:hypothesis_testing}
\noindent 
In a binary hypothesis test between hypotheses $H_0$ and $H_1$, the Type I and Type II errors are, respectively,
\begin{alignat}{2}
\text{Type I}&: \quad &&\text{Pr}[\text{Guess } H_1 \vert H_0] \\
\text{Type II}&: &&\text{Pr}[\text{Guess } H_0 \vert H_1].
\end{alignat}

\noindent In general, in designing an effective hypothesis test there will be a trade-off between the relative magnitude of these types of error; they cannot be arbitrarily decreased simultaneously. In an \emph{asymmetric} hypothesis test, the goal is to minimise one of these errors given a fixed upper bound on the other. In this addendum, we prove the following proposition in the context of asymmetric hypothesis testing:

\begin{prop}\label{prop:p=1}
Any strategy $\Omega$ that: (a) accepts $\ket{\psi}$ with certainty, $p \coloneqq \tr(\Omega \ket{\psi}\bra{\psi}) = 1$; and (b) does not accept $\sigma$ with certainty ($\Delta_\epsilon>0$) requires asymptotically fewer measurements in infidelity $\epsilon$ to distinguish these states to within a fixed Type II error than the best protocol based on a strategy $\Omega'$ where $\tr(\Omega' \ket{\psi}\bra{\psi}) <1$.
\end{prop}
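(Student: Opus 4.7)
The plan is to reduce each protocol's decision problem to an i.i.d.\ binary hypothesis test between two Bernoulli distributions and then exploit the Chernoff--Stein lemma to compare the sample complexities in the two regimes.

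First I would observe that, for any single-copy strategy $\Omega$, the $n$ measurement rounds yield i.i.d.\ pass/fail samples with success probability exactly $p := \tr(\Omega\ket{\psi}\bra{\psi})$ under the good hypothesis and at most $p - \Delta_\epsilon$ under any adversarial $\sigma$ satisfying $\braket{\psi|\sigma|\psi}\le 1-\epsilon$ (this uses the definition of $\Delta_\epsilon$ and the fact, already established in Appendix~\ref{app:ver_strat_opt}, that $\Delta_\epsilon$ is linear in $\epsilon$ whenever $\Delta_\epsilon > 0$). The protocol's acceptance rule is an arbitrary function of this pass/fail sample vector, and by the Neyman--Pearson lemma the optimal such rule is a threshold on the total pass count. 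The whole problem therefore collapses to distinguishing $\mathrm{Bern}(p)^{\otimes n}$ from $\mathrm{Bern}(p-\Delta_\epsilon)^{\otimes n}$, with Type II error $\le\delta$ and Type I error controlled (identically zero in the $p=1$ regime).

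Next I would treat the two regimes in parallel. When $p=1$, the trivial rule ``reject on the first fail'' achieves Type I error $0$ and Type II error $(1-\Delta_\epsilon)^n$, giving the upper bound $n_{p=1} = O\bigl(\ln\delta^{-1}/\Delta_\epsilon\bigr) = O\bigl(\epsilon^{-1}\ln\delta^{-1}\bigr)$. When $p<1$, Chernoff--Stein pins the optimal sample complexity at
\[
n_{p<1} \sim \frac{\ln\delta^{-1}}{D\bigl(\mathrm{Bern}(p)\,\|\,\mathrm{Bern}(p-\Delta_\epsilon)\bigr)},
\]
and a standard Taylor expansion of the KL divergence about $\Delta_\epsilon = 0$ yields $D = \Delta_\epsilon^2/\bigl(2p(1-p)\bigr) + O(\Delta_\epsilon^3)$, whereas the same expansion at $p=1$ collapses to $D = \Delta_\epsilon + O(\Delta_\epsilon^2)$. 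Hence $n_{p<1} = \Theta\bigl(\epsilon^{-2}\ln\delta^{-1}\bigr)$ while $n_{p=1} = \Theta\bigl(\epsilon^{-1}\ln\delta^{-1}\bigr)$, which is precisely the quadratic separation claimed in the proposition.

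The main obstacle I expect is the rigorous lower bound $n_{p<1} = \Omega(\epsilon^{-2})$: one must show that no cleverness in the estimator --- non-threshold, sequential, or based on some richer sufficient statistic built from the round-by-round measurement labels --- can beat the Chernoff--Stein rate. The standard remedy is to invoke the data-processing inequality, which collapses any such decision rule to a test on the i.i.d.\ Bernoulli outcomes and leaves the KL divergence as the operational gold standard. A secondary subtlety is that the Taylor prefactor $1/\bigl(2p(1-p)\bigr)$ diverges as $p\to 1$; this is harmless because the proposition is an asymptotic comparison at fixed $\Omega$ (hence fixed $p$ bounded away from $1$) in the regime $\epsilon\to 0$, so the prefactor is simply an $\Omega$-dependent constant absorbed into the $\Theta$.
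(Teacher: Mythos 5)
Your proposal is essentially the paper's own argument: reduce each protocol to an i.i.d.\ Bernoulli hypothesis test and invoke the Chernoff--Stein lemma, comparing the relative entropy $D\bigl(p \,\Vert\, p-\Delta_\epsilon\bigr) \approx \Delta_\epsilon^2/\bigl(2p(1-p)\bigr)$ for fixed $p<1$ with its $p=1$ limit $\ln\bigl((1-\Delta_\epsilon)^{-1}\bigr)\approx \Delta_\epsilon$, which gives exactly the quadratic-versus-linear separation the paper derives in Appendix~\ref{app:hypothesis_testing}. Your additional remarks (Neyman--Pearson/data-processing to rule out cleverer estimators, and the fixed-$p$ prefactor) only make explicit points the paper leaves implicit, so the route is the same.
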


We have inherited notation regarding verification strategies from Appendix~\ref{app:verification}. Here, hypothesis $H_0$ corresponds to accepting the target $\ket{\psi}$, and hypothesis $H_1$ corresponds to accepting the alternative (that the output was far from $\ket{\psi}$). Proposition \ref{prop:p=1} states that, in a framework where we attempt to verify $\ket{\psi}$ by repeatedly making two-outcome measurements picked from some set, asymptotically it is always beneficial to use measurements that accept $\ket{\psi}$ with certainty. In this case, each measurement is a Bernoulli trial with some acceptance probability. An example of a protocol which would {\em not} satisfy this property would be estimating the probability of violating a Bell inequality for a maximally entangled 2-qubit state $\ket{\psi}$.

In general, the optimum asymptotic rate at which the Type II error can be minimised in an asymmetric hypothesis test is given by the \emph{Chernoff-Stein lemma}:
\begin{thm}[Cover and Thomas~\cite{Cover2006}, Theorem 11.8.3.]\label{thm:chernoff_stein}
Let $X_1, X_2 \ldots X_n$ be drawn i.i.d. from a probability mass function $Q$. Then consider the hypothesis test between alternatives $H_0$: $Q=P_0$ and $H_1$: $Q=P_1$. Let $A_n$ be an acceptance region for the null hypothesis $H_0$; i.e. it is a set consisting of all possible strings of outcomes with which the conclusion $H_0$ is drawn. Denote Type I and Type II errors after $n$ samples as $\alpha_n^*$ and $\beta_n^*$, respectively. 
Then for some constraint parameter $0 < \chi < \frac{1}{2}$, define
\begin{equation*}
\delta_n^\chi = \min_{\substack{A_n \\ \alpha_n^* < \chi}} \beta_n^*.
\end{equation*}
\noindent Then asymptotically
\begin{equation*}
\lim_{n \rightarrow \infty} \frac{1}{n}\ln\delta_n^\chi = -D(P_0\ \Vert P_1),
\end{equation*}
where $D(P_0 \Vert P_1)$ is the relative entropy between probability distributions $P_0$ and $P_1$.
\end{thm}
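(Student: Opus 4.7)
The plan is to recast the sample complexity as (up to a log factor) the reciprocal of a binary relative entropy via Theorem~\ref{thm:chernoff_stein}, and then Taylor-expand this relative entropy in $\epsilon$ for both regimes. For any candidate strategy $\tilde{\Omega}$ each copy produces a Bernoulli trial: under $H_0$ the pass probability is $\tilde{p} = \bra{\psi}\tilde{\Omega}\ket{\psi}$, while under $H_1$ the adversary selects $\sigma$ with $F(\psi,\sigma)\le 1-\epsilon$ to maximise $\tr(\tilde{\Omega}\sigma)$. Parameterising an extremal pure $\sigma = \ket{\psi_\epsilon}\bra{\psi_\epsilon}$ by $\ket{\psi_\epsilon} = \sqrt{1-\epsilon}\ket{\psi} + e^{i\phi}\sqrt{\epsilon}\ket{\psi^\perp}$ gives the workhorse identity
\begin{equation}
\tr(\tilde{\Omega}\sigma) = (1-\epsilon)\tilde{p} + \epsilon\bra{\psi^\perp}\tilde{\Omega}\ket{\psi^\perp} + 2\sqrt{\epsilon(1-\epsilon)}\,\mathrm{Re}\!\left(e^{-i\phi}\bra{\psi}\tilde{\Omega}\ket{\psi^\perp}\right),
\end{equation}
which will drive both sides of the comparison.

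For $\tilde{\Omega} = \Omega$, condition (a) forces $\Omega\ket{\psi} = \ket{\psi}$, so $\bra{\psi}\Omega\ket{\psi^\perp} = 0$ and the $\sqrt{\epsilon}$ cross term vanishes. The adversary's optimal acceptance probability is then $q_\epsilon = 1 - \epsilon(1-c)$ with $c = \max_{\ket{\psi^\perp}}\bra{\psi^\perp}\Omega\ket{\psi^\perp} < 1$ by condition (b), so $\Delta_\epsilon = \Theta(\epsilon)$. Theorem~\ref{thm:chernoff_stein} identifies the optimal Type II exponent as $D(\mathrm{Bern}(1)\,\|\,\mathrm{Bern}(1-\Delta_\epsilon)) = -\ln(1-\Delta_\epsilon) = \Theta(\epsilon)$, yielding the upper bound $n = \Theta(\epsilon^{-1}\ln\delta^{-1})$.

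For any candidate $\Omega'$ with $p' < 1$ I would first argue we may restrict to $\Omega'\ket{\psi} = p'\ket{\psi}$: otherwise some $\ket{\psi^\perp}$ has $\bra{\psi}\Omega'\ket{\psi^\perp}\neq 0$, and for small enough $\epsilon$ the $\sqrt{\epsilon}$ cross term dominates the $\epsilon x$ correction, so a suitable $\phi$ makes $\tr(\Omega'\sigma) = p'$ exactly; the adversary and target then induce identical Bernoullis and no test discriminates them, violating (b) for the corresponding $\Delta'_\epsilon$. With $\Omega'\ket{\psi}=p'\ket{\psi}$ the cross term again vanishes and $q'_\epsilon = p' - \epsilon(p' - c')$ for some $c' < p'$ (strict inequality being necessary for a nonzero rate). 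Both $\mathrm{Bern}(p')$ and $\mathrm{Bern}(q'_\epsilon)$ lie strictly inside $(0,1)$ and differ by $\Theta(\epsilon)$, so the standard expansion
\begin{equation}
D(\mathrm{Bern}(p)\,\|\,\mathrm{Bern}(q)) = \frac{(p-q)^2}{2p(1-p)} + O\!\left((p-q)^3\right)
\end{equation}
gives $D = \Theta(\epsilon^2)$, and Chernoff--Stein forces $n' = \Theta(\epsilon^{-2}\ln\delta^{-1})$, quadratically worse in $\epsilon$ than $n$ and thus asymptotically larger for every fixed statistical power.

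The main obstacle I anticipate is making the eigenvector reduction rigorous: I need to verify that whenever $\|(\Id - \ket{\psi}\bra{\psi})\Omega'\ket{\psi}\| > 0$ there is an admissible pure $\sigma$ with $F(\psi,\sigma) = 1 - \epsilon$ and $\tr(\Omega'\sigma) = p'$ for all sufficiently small $\epsilon$. This reduces to comparing the order-$\sqrt{\epsilon}$ cross-term coefficient $2\sqrt{\epsilon(1-\epsilon)}|\bra{\psi}\Omega'\ket{\psi^\perp}|$ against the order-$\epsilon$ quantity $|\epsilon(p' - \bra{\psi^\perp}\Omega'\ket{\psi^\perp})|$ for an appropriate $\ket{\psi^\perp}$, which is immediate once the cross-term coefficient is nonzero. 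The remaining bookkeeping --- uniform Taylor control of the binary relative entropy once $p'$ is bounded inside $(0,1)$, and checking that the Chernoff--Stein asymptotic regime is entered as $n\to\infty$ with $\epsilon \to 0$ at the appropriate relative rate --- is routine.
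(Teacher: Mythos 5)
There is a fundamental mismatch here: the statement you were asked to prove is the Chernoff--Stein lemma itself (the assertion that $\lim_{n\to\infty}\frac{1}{n}\ln\delta_n^\chi = -D(P_0\Vert P_1)$ for an asymmetric binary hypothesis test), but your proposal does not prove it --- it \emph{invokes} it. Your opening sentence announces the plan to ``recast the sample complexity \ldots via Theorem~\ref{thm:chernoff_stein}'' and you later write ``Theorem~\ref{thm:chernoff_stein} identifies the optimal Type II exponent,'' so the argument is circular with respect to the target statement. What you have actually sketched is a proof of the downstream result, Proposition~\ref{prop:p=1} (the quadratic separation in $\epsilon$ between strategies with $p=1$ and $p<1$), which the paper derives in Appendix~\ref{app:hypothesis_testing} by exactly the route you describe: treat each copy as a Bernoulli trial, apply Chernoff--Stein, and Taylor-expand $D(p\Vert p-\Delta_\epsilon)$ to get $\Theta(\epsilon^{-1})$ when $p=1$ versus $\Theta(\epsilon^{-2})$ when $p<1$. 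For the theorem as stated, the paper offers no proof at all --- it is cited verbatim from Cover and Thomas --- and a genuine proof would require entirely different machinery: the achievability direction via a relative-entropy-typical acceptance region $A_n = \{x^n : |\tfrac{1}{n}\ln(P_0(x^n)/P_1(x^n)) - D(P_0\Vert P_1)| < \eta\}$ together with the weak law of large numbers, and the converse via a change-of-measure bound showing any region with Type I error below $\chi$ must carry $P_1$-mass at least $e^{-n(D+\eta)}$ up to vanishing corrections. None of that appears in your proposal.

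As a secondary remark: if your intended target was in fact Proposition~\ref{prop:p=1}, your sketch is broadly aligned with the paper's treatment and even adds a useful step the paper handles elsewhere (the reduction to $\Omega'\ket{\psi} = p'\ket{\psi}$ via the $\sqrt{\epsilon}$ cross term). But you should be careful with the claim that a nonzero cross term lets the adversary achieve $\tr(\Omega'\sigma) = p'$ \emph{exactly} and hence ``no test discriminates them'': the correct conclusion is only that the adversary can make the gap $o(\epsilon)$ or even reverse its sign, which degrades rather than annihilates the relative entropy; the quadratic-versus-linear comparison still needs the Taylor expansion of $D$ in that regime. In any case, this does not repair the central problem that the stated theorem is assumed rather than proved.
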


For clarity we drop the sub- and superscript $\delta_n^\chi \rightarrow \delta$. The relative entropy typically takes a pair of probability distributions as arguments, but given that each hypothesis is concerned only with a single Bernoulli-distributed random variable uniquely specified by a a pair of real parameters (the quantities $p$ and $p-\Delta_\epsilon$), we will use the shorthand $D(p \Vert q)$ for real variables $p$ and $q$. In this case the relative entropy can be expanded as
\begin{equation}
    D(a \Vert b) = a\ln\frac{a}{b} + (1-a)\ln\frac{1-a}{1-b}.
\end{equation}
Note that in the limit where $a\rightarrow 1$, using that $\lim_{a\rightarrow 1^-} (1-a)\ln(1-a) = 0$, this expression becomes
\begin{equation}
\lim_{a\rightarrow 1^-} D(a \Vert b) = \ln\frac{1}{b}.
\end{equation}
After rearranging the expression for the optimal asymptotic Type II error given by the Chernoff-Stein lemma, we can achieve a test with statistical power $1-\delta$ by taking a number of measurements
\begin{equation}
    n > \frac{1}{D\left(p\Vert p -\Delta_{\epsilon}\right)}\ln \frac{1}{\delta}.
\end{equation}
Moreover, this bound is tight in that it gives the correct asymptotic relationship between $n$, $D$ and $\delta$; generically $\delta$ can be lower bounded (\cite{Cover2006}, p666) such that
\begin{equation}
    \frac{e^{-Dn}}{n+1} \leq \delta \leq e^{-Dn}.
\end{equation}

Two important limiting cases of this expression have relevance here. Firstly, if $p \gg \Delta_\epsilon$, then Taylor expanding $n$ for small $\Delta_\epsilon$ gives that it is sufficient to take
\begin{equation}
    n \geq \frac{2 p(1-p)}{\Delta_\epsilon^2}\ln\frac{1}{\delta}.
\end{equation}
Secondly, if $p = 1$, then it is sufficient to take
\begin{equation}
n \geq \frac{-1}{\ln\left(1-\Delta_\epsilon\right)}\ln\frac{1}{\delta} \approx \frac{1}{\Delta_\epsilon}\ln\frac{1}{\delta},
\end{equation}
which is in agreement with the scaling previously derived in Eq.~\ref{eq:globalopt}. These are the limiting cases of the scaling of $n$ with $\Delta_\epsilon$. In the worst case, $n$ scales quadratically in $\Delta_\epsilon^{-1}$; however, for any strategy where the state $\ket{\psi}$ to be tested is accepted with certainty, only a total number of measurements linear in $\Delta_\epsilon^{-1}$ are required. Thus asymptotically, a strategy where $p=1$ is always favourable (i.e. gives a quadratic improvement in scaling with $\Delta_\epsilon$) for any $\Delta_\epsilon > 0$.

\bibliographystyle{apsrev4-1}
\bibliography{Verification}

\end{document}